\newtheorem {conjecture} {Conjecture}[section]
\theoremstyle{definition}
\newtheorem{proposition}{Proposition}
\theoremstyle{plain}
\newcommand{\kb}[2]{\ket{#1}\bra{#2}}
\newcommand{\bk}[2]{ \braket{#1 | #2}}
\newcommand{\mbi}{\mathbb{I}}
\DeclareMathOperator{\Tr}{Tr}
\newcommand{\Z}{\mathcal{Z}}
\newcommand{\X}{\mathcal{X}}
\newcommand{\ixaxb}{I(X^A : X^B)}
\newcommand{\izab}{I(Z^A : B)}
\newcommand{\izazb}{I(Z^A : Z^B)}
\newcommand{\om}{\omega}
\newcommand{\iab}{I(A : B)}
\newcommand{\hza}{H(Z^A)}
\newcommand{\hzb}{H(Z^B)}
\newcommand{\hzazb}{H(Z^AZ^B)}
\newcommand{\hxa}{H(X^A)}
\newcommand{\hxb}{H(X^B)}
\newcommand{\hxaxb}{H(X^AX^B)}
\newcommand{\hma}{H(M_i^A)}
\newcommand{\hmb}{H(M_i^B)}
\newcommand{\hmamb}{H(M_i^AM_i^B)}
\newcommand{\obsth}{\frac{1}{\sqrt{3}}}
\newcommand{\imax}{I_{max}(M^A : M^B)}
\title{On the CQC Conjecture: A sufficient condition and an extension}
\author{Hasan Iqbal\thanks{hasan.iqbal@uwyo.edu}}
\affil{Department of Electrical Engineering and Computer Science}
\affil{University of Wyoming}
\date{
	\today
}
\begin{document}
	
	\maketitle
	
	\begin{abstract}
The CQC conjecture by Schneeloch et al. (Physical Review A 90.6, 2014) asserts that the sum of classical mutual information between two parties obtained by measuring individual systems in two mutually unbiased bases cannot exceed their shared quantum mutual information. This remarkable conjecture, which deserves more work, still remains open. In this article, using a result by Coles and Piani (Physical Review A 89.2, 2014), we derive a sufficient condition for which the CQC conjecture holds. We show that this condition validates the conjecture for a wider class of states compared to the states that were mentioned in the original work. Furthermore, we extend this conjecture to a higher number of mutually unbiased bases and prime dimensions. We prove this extended CQC conjecture on isotropic states of arbitrary prime dimensions and simulate it extensively for dimensions 3 and 5 for random bipartite states, observing no contradiction. 
	\end{abstract}
	%	For example, Holevo showed that mutual information is the upper bound of transmittable classical information. \cite{holevo1973bounds} \td	
	\section{Introduction}
	Entropic uncertainty (EU) relations \cite{Coles2017Feb, Wehner2010Feb} are widely used in different areas of quantum information theory. These equations express fundamental limits of obtainable information from quantum systems if one performs different measurements \cite{nielsen2001quantum}. Although the entropy-based formulations of these equations are more widely studied, mutual information based formulations are also quite useful, as they capture both individual and combined entropies in a multipartite system. Hall \cite{hall1995information, hall1997quantum} worked on the mutual information based formulations and posed interesting open questions. For example, Hall discussed the idea of information exclusion relation, which states that information gain from a measurement could only come at the expense of information gain in a measurement that is mutually unbiased to the first one \cite{hall1995information}. In that work, after proving an upper bound on mutual information sum when the measured system could be correlated with a classical system, he also posed the open question about a similar upper bound when an arbitrary number of mutually unbiased bases (MUB) \cite{durt2010mutually} are involved. These correlated systems, however, could be quantum too. As we have seen in the case of EU relation involving quantum memories formulated by Berta et al. \cite{berta2010uncertainty}, it is interesting to ask about upper bounds for information exclusion relation when the correlated system is quantum and different MUBS are involved. This is precisely what Schneeloch et al. asked \cite{schneeloch2014uncertainty}. This very interesting conjecture, which the authors called the `CQC conjecture,' although appeared over a decade ago in 2014, still remains open. The authors showed that if proven, this conjecture will have nice applications in strengthening Berta et al.'s EU relation \cite{berta2010uncertainty}, witnessing quantum entanglement \cite{horodecki2009quantum}, and also help in proving the security of quantum key distribution protocols \cite{pirandola2020advances}. They have also proved this conjecture for all pure states, states with one maximally mixed subsystem, and if one of the measurements is minimally disturbing, that is, if the measurement operator commutes with one of the reduced subsystems. In this work, we show that the CQC conjecture holds for more states other than the ones mentioned earlier, by identifying a sufficient condition based on Coles and Piani's work \cite{coles2014improved}. We also identify a further application of the CQC conjecture in generalizing the Maassen-Uffink EU relation \cite{maassen1988generalized}, whose potential applications may be of independent interest. Furthermore, we extend this conjecture to cover more mutually unbiased basis measurements and prime dimensions, which we call the ECQC conjecture. Then, we identify a sufficient condition for this ECQC conjecture. Moreover, with explicit calculations, we demonstrate the validity of this extended conjecture on isotropic states with arbitrary prime dimensions. We also perform experimental simulations of the validity of the ECQC conjecture on random pure and bipartite states in dimensions 3 and 5, where we do not observe any contradiction to this conjecture. 

	%	  Consequently, if this extended conjecture, denoted by `ECQC' in this article, is shown to be true, the original CQC conjecture would also hold for all states with prime dimensional subsystems.
	\section{Notation and Preliminaries}
In this work, we write $\rho_{AB}$ to indicate a bipartite state shared between two parties $A$ and $B$. We denote the entropy $H$ of system $A$ by $H(A)$ and similarly use $H(B)$ for system $B$. For the joint entropy of $A$ and $B$, we use the notation $H(AB)$. Two mutually unbiased bases we often use in this work are the bases $\Z$ and $\X$. These bases are high-dimensional generalizations of the standard computational and Hadamard basis. That is, here, $\Z = \{\ket{0}, \ket{1}, ... , \ket{d - 1}\}$ for the considered finite dimension $d < \infty$, and similarly, $\X = \{\mathcal{F}\ket{0}, \mathcal{F}\ket{1}, ... , \mathcal{F}\ket{d - 1}\}$, where $\mathcal{F}$ is the $d$-dimensional Fourier matrix:

\begin{align*}
	\mathcal{F} =  \frac{1}{\sqrt{d}} \sum_{x, y = 0}^{d - 1} e^{\frac{2 \pi i x y}{d}} \kb{x}{y}. 
\end{align*}

If both subsystems $A$ and $B$ are measured in the basis $\Z$, the resulting state is denoted by $\rho_{Z^{AB}}$. The joint entropy in this post measurement state is denoted by $H(Z^AZ^B)$. Similarly, for any other measurement in the basis $\{M_i\}_{i = 0}^{d + 1}$, the joint entropy is denoted by $H(M_i^AM_i^B)$. We denote the mutual information between two parties $A$ and $B$ by $I(A : B) = H(A) + H(B) - H(AB)$. Similarly, after one or both parties make a measurement on their respective subsystems, we use $I(M_i^A : B)$ or $I(M_i^A : M_i^B)$ to represent the decreased mutual information, respectively. For conditional entropies, we denote $H(A | B)$ to denote the conditional entropy of $A$ given access to the $B$ subsystem. After the measurement, the conditional entropies are represented by $H(M_i | B)$ or $H(M_i^A | M_i^B)$, depending on one- or two-sided measurements. To trace out a subsystem $A$ ($B$) from a bipartite state $\rho_{AB}$, we use the notation $\Tr_A(\rho_{AB}) = \rho_{B}$ ($\Tr_B(\rho_{AB}) = \rho_{A}$). Sometimes we denote a maximally entangled state by $\ket{\Psi^+}_{AB}$, without mentioning the subsystems $A$ and $B$'s dimensions, but it should be clear from the context. To denote the set of natural numbers from $1$ to $n$, we use the notation $[n]$, and to denote the subsets of a specific size $k$ of this set, we use $[n]^k$. All logarithms used in this work are of base $2$, and we also adopt the standard convention of information theory that $0 \log(0) \equiv 0$.

	\subsection{Mutually Unbiased Bases}
For an excellent review on MUBs, readers are encouraged to consult Durt et al.'s work \cite{durt2010mutually}. Informally, two measurement bases are said to be mutually unbiased if measuring a quantum state in one of these bases leaves us as maximally uncertain about the measurement outcomes in the other basis. More formally, we say that two measurement bases $\Z = \{ \ket{z_0}, \ket{z_1}, ... \ket{z_{d - 1}} \}$ and $\X = \{ \ket{x_0}, \ket{x_1}, ... \ket{x_{d - 1}} \}$ are mutually unbiased if it holds that:

\begin{align*} 
	|\bk{z_i}{x_j} |^2 = \frac{1}{d}, \text{ for any } i, j \in \{0, 1, ... d - 1	\}. 
\end{align*}

A set of such measurement bases are said to be mutually unbiased if they are pairwise mutually unbiased. This class of measurements is important in a wide range of information processing tasks, including quantum state tomography, where MUBs give the minimal and optimal set of measurements \cite{wootters1989optimal} to reconstruct the state's density operator. Although it is an interesting open question to determine the maximum number of MUBs in a dimension $d$, it is known that there are $d + 1$ MUBs if $d$ is a prime power \cite{durt2010mutually, wootters1989optimal, bandyopadhyay2002new}.

	\subsection{Entropic Uncertainty Relations}
Entropic uncertainty relations are quite integral in the study of quantum information theory in general and its various applications, such as analysis of quantum cryptographic protocols. See \cite{Coles2017Feb, Wehner2010Feb} for comprehensive surveys on this topic. As an example, we can see the well-known EU relation by Maassen and Uffink \cite{maassen1988generalized}, whose strengthened version by Coles et al. \cite{Coles2017Feb} we present below. The Maassen-Uffink relation states that for a quantum state $\rho_{A}$, if one performs two MUB measurements $\Z$ and $\X$ on it, the resulting uncertainty will have the following lower bound:

\begin{align}	
	H(Z^A) + H(X^A) \ge \log(d) + H(A). \label{eq:massuff}	
\end{align}

However, quantum systems could be correlated with other quantum systems too. In that case, conditioning on those correlated systems gives rise to interesting uncertainty relations as shown by Berta et al. \cite{Berta2010Sep}. They have shown that if the subsystem $A$ of a bipartite system $\rho_{AB}$ is arbitrarily correlated with some quantum system $B$, then performing $\Z$ and $\X$ basis measurements on $A$ will have the following lower bound for conditional entropies:

\begin{align}	
	H(Z^A | B) + H(X^A | B) \ge \log(d) + H(A | B), \label{eq:berta}	
\end{align}

where $d$ is the number of possible outcomes in the $\Z$ and $\X$ basis measurement, and also the dimension of the subsystems $A$ and $B$. Note that, in this case, we are considering $\Z$ and $\X$ to be MUB, but the original formulation by Berta et al. \cite{berta2010uncertainty} holds for any pair of observables, and a slight modification comes on the right-hand side of inequality \eqref{eq:berta}. Among the generalizations of the Maassen-Uffink relation to involve more than two MUB measurements, the one that we use in this work is found in one of the works by Sanchez \cite{sanchez1993entropic}. In this generalization, if we consider $d$ MUBs given that the dimension $d$ is a prime number, then the sum of the entropies of the resulting classical random variables $M_i^A$ for all $i \in \{1, ..., d + 1\}$ is as follows: 

\begin{align}
	\sum_{i = 1}^{d + 1} H(M_i^A) \ge (d + 1)(\log(d + 1) - 1) \label{eq:massuffd},	
\end{align}

where we see the usage of the fact that there are $d + 1$ MUBs if $d$ is a prime power as discussed in the previous section.

	%	We also discuss some implications of the CQC and ECQC conjectures on generalizing entropic uncertainty relations if they are proven to be true. 
	%	Finally, we comment on the ECQC conjecture on composite dimensions where it no longer holds, potentially highlighting the difference between prime and composite dimensional states. 
	%	
	\section{The CQC conjecture} \label{sec:cqc}
	The CQC conjecture \cite{schneeloch2014uncertainty} asserts that the sum of two classical mutual information, obtained by bipartite measurements in the $\Z$ and $\X$ bases, is always upper bounded by the original mutual information. That is, on an arbitrary bipartite state $\rho_{AB}$, if we perform two measurements $\Z$ and $\X$ on both of the parties, we have that:
	\begin{align*}
		\izazb + \ixaxb \le \iab,
	\end{align*}
	where $I(.)$ is the mutual information between the random variables. It is noted in \cite{henderson2001classical} that, mutual information holds all types of classical and quantum correlations shared among the parties. In that sense, it is interesting to study the reasons for the validity of this conjecture. Because as noted in the work by Coles et al. \cite{Coles2017Feb}, the following does not always hold:
	\begin{align}
		I(Z^A : B) + I(X^A : B) \le \iab. \label{eq:doesnthold}
	\end{align}
	So the CQC conjecture, one could argue that, captures the reduction of mutual information when the conditioning subsystem makes the transition from the quantum to the classical realm. Along with this particular transition being an interesting study on its own, the original paper on CQC conjecture by Schneeloch et al. \cite{schneeloch2014uncertainty} also mentions important applications. For example, this conjecture can strengthen Berta's uncertainty relation \cite{berta2010uncertainty}, whose formulation we shall see later in this work. It also helps witnessing entanglement as the authors of the CQC conjecture showed. If CQC conjecture holds, then we have $I(A : B) \le \min \{H(A), H(B)\}$, and if it turns out that $\izazb + \ixaxb > \min \{H(A), H(B)\}$, then this implies that $H(A | B) < 0$, signifying entanglement between $A$ and $B$. The CQC conjecture also helps in proving security of quantum key distribution protocols as it places the following upper bound on the adversary Eve's ($E$) mutual information with one of the honest party Alice ($A$) \cite{devetak2005distillation}:
	\begin{align*}
		I(A : E) \le 2 \log(d) - \izazb - \ixaxb.
	\end{align*}
	Consequently, the quantum key distribution protocol designers can use this upper bound in establishing a rate for secure keys in Alice and Bob's communication \cite{pirandola2020advances}. Here, we identify one other possible application of the CQC conjecture. In the literature, we have seen generalizations of the Maassen-Uffink uncertainty relation, in the forms of involving multiple MUBs or expressions in terms of conditional entropies involving quantum memories \cite{Coles2017Feb, Berta2010Sep}. Interestingly, if the CQC conjecture holds, we get another generalization of the Maassen-Uffink relation to two parties in the following form, which has not been discussed previously as far as we know. 
	\begin{proposition}\label{prop:appcqc1}
		For a bipartite state $\rho_{AB}$, if we consider two mutually unbiased measurements $\Z$ and $\X$ in dimension $d = 2$, then, given the CQC conjecture, we have that
		\begin{align*}
			H(Z^AZ^B) + H(X^AX^B) \ge 2 + H(AB).
		\end{align*}
	\end{proposition}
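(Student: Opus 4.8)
The plan is to combine the CQC conjecture directly with the Maassen--Uffink relation \eqref{eq:massuff} applied separately to each of the two reduced subsystems. First I would expand the CQC inequality $\izazb + \ixaxb \le \iab$ using the definition of mutual information, writing $\izazb = \hza + \hzb - \hzazb$, the analogous identity for the $\X$ basis, and $\iab = H(A) + H(B) - H(AB)$. Rearranging so that the joint post-measurement entropies sit alone on the left gives
\[
\hzazb + \hxaxb \;\ge\; \big(\hza + \hxa\big) + \big(\hzb + \hxb\big) - H(A) - H(B) + H(AB).
\]

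Next I would apply the Maassen--Uffink relation \eqref{eq:massuff} to the reduced state $\rho_A$, which in dimension $d = 2$ reads $\hza + \hxa \ge 1 + H(A)$, and likewise to $\rho_B$ to obtain $\hzb + \hxb \ge 1 + H(B)$. Here the only thing to note is that the single-party measured entropies appearing in the expansion above are exactly the quantities that Maassen--Uffink bounds, since the distribution of a $\Z$- or $\X$-outcome on $A$ depends only on $\rho_A = \Tr_B(\rab)$ (and symmetrically for $B$), so one is free to invoke the single-system relation on each marginal of $\rab$. Substituting these two bounds into the displayed inequality, the $H(A)$ and $H(B)$ terms cancel and what remains is $\hzazb + \hxaxb \ge 2 + H(AB)$, which is the claim.

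I do not expect a genuine obstacle: the argument is essentially a two-line algebraic manipulation, with CQC controlling the gap between the bipartite mutual information and the sum of the two classical ones, and Maassen--Uffink independently supplying a lower bound on each single-party measured-entropy sum. The points worth stating carefully are that the result is specific to $d = 2$, so that $\log d = 1$ and the additive constant on the right-hand side comes out to $2$, and that the single-system uncertainty relation is legitimately applied to both reduced states simultaneously.
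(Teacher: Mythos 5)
Your proposal is correct and follows essentially the same route as the paper: expand the CQC inequality via the definition of mutual information, apply the Maassen--Uffink relation (with $\log d = 1$ for $d=2$) to both marginals $\rho_A$ and $\rho_B$, and rearrange so the $H(A)$, $H(B)$ terms cancel. Your remark that the measured single-party entropies depend only on the reduced states is a valid (and slightly more careful) justification of the step the paper takes implicitly.
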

	\begin{proof}
		We start from the CQC conjecture and see that proposition \eqref{prop:appcqc1} follows directly from it. 
		\begin{align}
			&\hspace{1cm} \izazb + \ixaxb \le \iab \nonumber\\
			&\implies \hza + \hzb - \hzazb + \hxa + \hxb - \hxaxb  \le \iab \nonumber\\
			&\implies \hza + \hxa + \hzb + \hxb- \hzazb - \hxaxb  \le \iab \nonumber\\		
			&\implies 1 + H(A) + 1 + H(B) - \hzazb - \hxaxb  \le H(A) + H(B) - H(AB) \nonumber\\
			&\implies H(Z^AZ^B) + H(X^AX^B) \ge 2 + H(AB), \label{eq:appcqc}
		\end{align}
		
		where in the penultimate line, we have used the Maassen-Uffink relation mentioned in inequality \eqref{eq:massuff}, strengthened by Coles et al. \cite{Coles2017Feb}. We note that the potential applications of this generalized Maassen-Uffink could be investigated in the future. \end{proof}
	
	\subsection{Extended CQC conjecture}
	The motivation to extend the CQC conjecture to cover more MUB measurements and prime dimensions comes from the discussion presented in the excellent review paper by Coles et al \cite{Coles2017Feb}. In that work, the authors list instances of works by various authors who showed that using mutually unbiased bases is not a guarantee to obtain strong EU relations. This weakness in the lower bound of EU relations ultimately results in providing further credence to the ECQC and especially to the CQC conjecture, as in both cases we are using less than the maximum number of MUBs available in the considered dimensions. Note that the CQC conjecture uses two MUBs, and even in dimension $2$, there are $3$ MUBs available \cite{durt2010mutually}. So considering more MUBs but that is always at least one less than the total number of MUBs available in a particular dimension $d$, we propose the following.
	\begin{conjecture} 
		Let $\rho_{AB}$ be a bipartite state where each subsystem has prime dimension $d$. If both parties make $d + 1$ measurements on their individual subsystems, then the mutual information of the resulting random variables will have the following upper bound: 
		\begin{align}
			I(A : B) \ge \underset{\mathcal{S} \in [d + 1]^d}{\min} \sum_{i \in \mathcal{S}} I(M^A_i : M^B_i). \label{conj:ecqc}
		\end{align}
	\end{conjecture}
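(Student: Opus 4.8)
The overall plan is to reduce the conjectured bound \eqref{conj:ecqc} to a statement about the one-sided post-measurement correlations $I(M_i^A:B)$ and to control those with an information-exclusion relation for a complete set of MUBs, in parallel with how Coles and Piani's result \cite{coles2014improved} bears on the original CQC conjecture. First I would apply the data-processing inequality: replacing Bob's system $B$ by the classical record $M_i^B$ of his measurement cannot increase correlations, so $I(M_i^A:M_i^B)\le I(M_i^A:B)$ for every $i$, and hence $\min_{\mathcal{S}\in[d+1]^d}\sum_{i\in\mathcal{S}}I(M_i^A:M_i^B)\le\min_{\mathcal{S}\in[d+1]^d}\sum_{i\in\mathcal{S}}I(M_i^A:B)$. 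Since the minimizing $\mathcal{S}$ just discards the single largest of the $d+1$ one-sided terms, this is at most $\tfrac{d}{d+1}\sum_{i=1}^{d+1}I(M_i^A:B)$, so it suffices to prove $\sum_{i=1}^{d+1}I(M_i^A:B)\le\tfrac{d+1}{d}\,I(A:B)$. (For $d=2$ nothing new is needed: assuming the CQC conjecture for each of the three MUB pairs immediately gives \eqref{conj:ecqc}, so the ECQC genuinely contains the CQC.)

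For the sufficient condition I would write $I(M_i^A:B)=H(M_i^A)-H(M_i^A|B)$, bound $\sum_{i=1}^{d+1}H(M_i^A)\le(d+1)\log d$ termwise, and lower-bound $\sum_{i=1}^{d+1}H(M_i^A|B)$ by a quantum-memory strengthening of the complete-set uncertainty relation \eqref{eq:massuffd} --- the kind of bound Coles and Piani's technique \cite{coles2014improved} produces from Berta-type relations. Substituting turns ``$\sum_{i=1}^{d+1}I(M_i^A:B)\le\tfrac{d+1}{d}I(A:B)$'' into an explicit inequality among $H(A)$, $H(B)$, $H(AB)$ (equivalently in $I(A:B)$ and $H(A|B)$), and the ECQC holds on every $\rho_{AB}$ that satisfies it. I would then verify, as in the CQC case, that this region strictly enlarges the list of already-settled cases (pure states, a maximally mixed marginal, a minimally disturbing measurement): for instance the memory term by itself closes the gap whenever $\rho_A$ is close enough to maximally mixed or $\rho_{AB}$ is sufficiently correlated.

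For the isotropic states $\rho_{AB}=p\,\kb{\Psi^+}{\Psi^+}+(1-p)\tfrac{\mbi}{d^2}$ in an odd prime dimension $d$, the plan is a direct calculation. Measuring both parties in the MUB defined by a unitary $W$ gives the joint law $P(a,b)=\tfrac{p}{d}\,|(W^{T}W)_{ab}|^{2}+\tfrac{1-p}{d^{2}}$. For the computational basis $W^{T}W=\mbi$, and for the Fourier basis $W^{T}W=\mathcal{F}^{2}$ is a permutation, so in both cases $P$ is the symmetric ``matched/mismatched'' distribution (one value on $d$ cells, another on the remaining $d^{2}-d$, uniform marginals) and $I(M_i^A:M_i^B)$ equals a single common number $I_{0}(p,d)$; for each of the other $d-1$ standard MUBs a Gauss-sum evaluation gives $|(W^{T}W)_{ab}|^{2}=1/d$ for all $a,b$, so $P$ is uniform and $I(M_i^A:M_i^B)=0$. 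Hence $\min_{\mathcal{S}\in[d+1]^d}\sum_{i\in\mathcal{S}}I(M_i^A:M_i^B)=I_{0}(p,d)$, which equals $I(Z^A:Z^B)$ and is therefore at most $I(A:B)$ by data processing; alternatively one writes $I_{0}$ and $I(A:B)$ in closed form (the eigenvalues of $\rho_{AB}$ being $p+\tfrac{1-p}{d^{2}}$ once and $\tfrac{1-p}{d^{2}}$ with multiplicity $d^{2}-1$) and checks the resulting single-variable inequality in $p$ over the admissible interval $p\in[-\tfrac{1}{d^{2}-1},1]$, for every prime $d$.

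The step I expect to be the real obstacle is the general sufficient condition: whether a sharp enough complete-set entropic uncertainty relation with quantum memory is available, and whether the passage from $\min_{\mathcal{S}}$ to $\tfrac{d}{d+1}$ of the full sum is not so lossy that the resulting region barely improves on ``one marginal maximally mixed.'' A smaller technical point is that the clean decorrelation of the $d-1$ non-special MUBs on isotropic states relies on the Gauss-sum evaluation, which requires $d$ odd (the $d=2$ case is separate and, as noted, reduces to the CQC), and on handling the endpoints of the $p$-interval where $\rho_{AB}$ loses positivity or full rank.
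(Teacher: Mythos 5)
First, a framing point: the statement you are proving is the ECQC \emph{conjecture}, which the paper itself does not prove in general --- it only (i) derives a sufficient condition for it from Xie et al.'s complete-set uncertainty relation with quantum memory (Proposition~\ref{prop:suffecqc}), (ii) verifies it by direct computation for isotropic states of prime dimension, and (iii) reports simulations. Your plan for the general case, however, contains a step that is provably false, so it cannot even serve as a route to a sufficient condition of comparable strength. After data processing and discarding the largest one-sided term you reduce the task to showing $\sum_{i=1}^{d+1} I(M_i^A:B)\le\tfrac{d+1}{d}\,I(A:B)$. Take $\rho_{AB}=\kb{\Psi^+}{\Psi^+}_{AB}$ with $d\ge 3$ prime: measuring $A$ in any basis leaves a cq-state with $I(M_i^A:B)=\log d$, so the left side is $(d+1)\log d$, while the right side is $\tfrac{2(d+1)}{d}\log d$; the inequality fails for every $d\ge 3$. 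The loss occurs exactly where the conjecture's mechanism lives: for highly entangled states the quantity $\sum_i I(M_i^A:B)-\sum_i I(M_i^A:M_i^B)$ is huge (measuring $B$ collapses the correlations in all but a couple of the $d+1$ bases), and your reduction throws this away, whereas the paper's sufficient condition keeps precisely this difference as the resource term. Relatedly, the strongest available complete-set memory relation (the paper's equation~\eqref{eq:xiefirst}) only yields $\sum_i I(M_i^A:B)\le\sum_i H(M_i^A)-\tfrac{d+1}{2}\log d-\tfrac{d+1}{2}H(A|B)\le\tfrac{d+1}{2}\bigl(\log d-H(A|B)\bigr)$, which is far weaker than your target (and recall, per~\eqref{eq:doesnthold}, that even the two-basis one-sided analogue of CQC is known to fail), so the ``quantum-memory strengthening'' you hope to invoke does not exist at the strength you need.

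The isotropic-state part of your proposal is correct and is essentially the paper's own argument in a cleaner wrapper: your joint law $P(a,b)=\tfrac{p}{d}\,|(W^{T}W)_{ab}|^{2}+\tfrac{1-p}{d^{2}}$, with $W^TW$ a permutation for the computational and Fourier bases and flat (by Gauss sums) for the remaining $d-1$ MUBs, reproduces the content of Proposition~\ref{prop:measmutinf} (which the paper only sketches), and your finish --- the minimizing subset drops one of the two nonzero terms, leaving $I(Z^A:Z^B)\le I(A:B)$ by data processing --- is a neat shortcut compared with the paper's explicit closed-form comparison in~\eqref{eq:extra} (the paper in fact checks the stronger statement $I(Z^A:Z^B)+I(X^A:X^B)\le I(A:B)$ for these states). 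But this only verifies the conjecture on a special one-parameter family, exactly as the paper does; combined with the failure of your general reduction, the proposal does not establish the stated conjecture, and for the general case you should either adopt the paper's formulation of a sufficient condition (retaining the two-sided-versus-one-sided gap) or find a genuinely new mechanism.
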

	\noindent The elements of the set $\mathcal{S}$ are collections of $d$ mutual information resulting from $d + 1$ bipartite MUB measurements. So, the right-hand side of this inequality \eqref{conj:ecqc} captures the quantity where both parties make $d + 1$ number of MUB measurements and select the $d$ among those resulting classical mutual information that gives them the smallest sum. Inspired by the simulation on random states performed by the authors of the CQC conjecture, we performed similar simulations on random states and multiple dimensions and did not find any contradiction.
	\newline\newline
	\noindent As an immediate application of this extension, as we have seen in the case of CQC, we will get a generalization of the Maassen-Uffink uncertainty relation in the bipartite case for multiple MUB measurements and prime dimensions. We state that in the form of the following proposition. 	 
	\begin{proposition}
		For a bipartite state $\rho_{AB}$, where each subsystem $A$ and $B$ has prime dimension $d$, for a set of mutually unbiased bases $\{M_i\}_{i = 1}^{d + 1}$, given the ECQC conjecture, we have that:
		\begin{align}
			\sum_{i = 1}^{d + 1}\hmamb \ge 2((d + 1)(\log(d) - 1)) - \iab - \imax, \label{eq:appecqc}
		\end{align} 		 	
		where $I_{max}(M^A : M^B)  = \underset{i}{\max } \text{ } I(M^A_i : M^B_i), \forall \{i = 1, 2, ..., d + 1\}$, for any measurement base $M_i$, from the set of MUBs in this dimension $d$.	 	
	\end{proposition}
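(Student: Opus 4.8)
The plan is to follow the same route as the proof of Proposition~\ref{prop:appcqc1}, with the two-basis Maassen--Uffink relation \eqref{eq:massuff} replaced by its $(d+1)$-MUB, prime-dimension generalization \eqref{eq:massuffd}. The starting point is the ECQC conjecture \eqref{conj:ecqc}. First I would evaluate the minimization on its right-hand side in closed form: each $\mathcal{S}\in[d+1]^d$ arises by deleting exactly one index from $[d+1]$, so $\sum_{i\in\mathcal{S}} I(M_i^A:M_i^B)$ is minimized precisely when the deleted index is the one achieving $\imax$. Hence
\begin{align*}
\min_{\mathcal{S}\in[d+1]^d}\ \sum_{i\in\mathcal{S}} I(M_i^A:M_i^B) \;=\; \sum_{i=1}^{d+1} I(M_i^A:M_i^B) \;-\; \imax,
\end{align*}
and \eqref{conj:ecqc} becomes the single inequality $\sum_{i=1}^{d+1} I(M_i^A:M_i^B) \le \iab + \imax$.

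Next I would expand each mutual information as $I(M_i^A:M_i^B) = \hma + \hmb - \hmamb$ and sum over $i$, exactly as in the displayed chain leading to \eqref{eq:appcqc}. Isolating the joint entropies gives
\begin{align*}
\sum_{i=1}^{d+1}\hmamb \;\ge\; \sum_{i=1}^{d+1}\hma \;+\; \sum_{i=1}^{d+1}\hmb \;-\; \iab \;-\; \imax .
\end{align*}
Now I would apply \eqref{eq:massuffd} separately to the reduced states $\rho_A$ and $\rho_B$ --- this is permissible because \eqref{eq:massuffd} is a one-party statement and the hypothesis that $d$ is prime guarantees that all $d+1$ MUBs exist --- which bounds each of the two marginal sums below by $(d+1)(\log(d+1)-1)$. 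Substituting, and using $\log(d+1)\ge\log(d)$, yields $\sum_{i=1}^{d+1}\hmamb \ge 2\big((d+1)(\log(d)-1)\big) - \iab - \imax$, i.e.\ \eqref{eq:appecqc}.

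I do not expect a genuine obstacle: once the ECQC conjecture is granted, the argument is bookkeeping parallel to Proposition~\ref{prop:appcqc1}. The two points needing care are (i) the closed form for the minimum, which rests only on the elementary fact that to minimize a sum of $d$ out of $d+1$ reals one discards the largest, so the excluded term is exactly $\imax$; and (ii) invoking \eqref{eq:massuffd} with its proper hypothesis ($d$ prime, so the full family of $d+1$ MUBs is available) and applying it to each marginal rather than to $\rho_{AB}$. If one is willing to state the conclusion with the sharper constant $2(d+1)(\log(d+1)-1)$, the same computation delivers it directly, and the displayed form \eqref{eq:appecqc} then follows by monotonicity of the logarithm.
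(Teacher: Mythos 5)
Your proof is correct and follows essentially the same route as the paper's: evaluate the minimum over $\mathcal{S}\in[d+1]^d$ as the full sum minus $\imax$, expand each mutual information, and apply the Sanchez bound \eqref{eq:massuffd} to both marginal entropy sums. You are in fact slightly more careful than the paper, whose derivation ends with the sharper constant $2\left((d+1)(\log(d+1)-1)\right)$ and leaves the passage to the stated $\log(d)$ form implicit, whereas you make that monotonicity step explicit.
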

	\begin{proof}
		We follow the same strategy  as proposition \eqref{eq:appcqc} to arrive at proposition \eqref{eq:appecqc}. We start with a generalization of the Maassen-Uffink relation developed by Sanchez \cite{sanchez1993entropic}. This establishes the following lower bound on the sum of entropies resulting from MUB measurements on a system with prime dimension $d$, where $d + 1$ MUB measurements exist \cite{durt2010mutually}:
		\begin{align*}
			\sum_{i = 1}^{d + 1} H(M_i) \ge (d + 1)  (\log(d + 1) - 1). 
		\end{align*}
		Using this lower bound we see the following, given our conjecture:  
		\begin{align*}
			& \underset{\mathcal{S} \in [d + 1]^d}{\min} \sum_{i \in \mathcal{S}} I(M^A_i : M^B_i) \le I(A : B) \\ 
			&\implies \underset{\mathcal{S} \in [d + 1]^d}{\min} \sum_{i \in \mathcal{S}} I(M^A_i : M^B_i) + \imax \le I(A : B) + \imax\\ 	
			&\implies  \sum_{i = 1}^{d + 1} I(M^A_i : M^B_i) \le I(A : B) + \imax\\ 		
			%	& \implies \min \sum_{i = 1}^{d} [\hma + \hmb - \hmamb] + \imax \le I(A : B)  + \imax\\
			&\implies \sum_{i = 1}^{d + 1} \hma + \sum_{i = 1}^{d + 1} \hmb - \sum_{i = 1}^{d + 1}\hmamb \le I(A : B) + \imax \\
			&\implies 2((d + 1)  (\log(d + 1) - 1)) - \sum_{i = 1}^{d + 1}\hmamb \le I(A : B) + \imax \\		
			&\implies \sum_{i = 1}^{d + 1}\hmamb \ge 2((d + 1)  (\log(d + 1) - 1)) - \iab - \imax. 	
		\end{align*}
		
	\end{proof}
	\section{Results}
	In this section, we present the results related to the CQC and ECQC conjectures. First we show that there is a sufficient condition for the validity of the CQC conjecture that covers states that are currently not known to hold CQC. This new class of states could be of arbitrary dimension $d$, not just the prime dimensions that are considered for the ECQC conjecture. 
	\subsection{A sufficient condition for CQC}
	%	 to  We note that there is a connection between the equality condition of the maassen uffink and the second part of the states. The following obtains the same result of validity but starting from Berta's relation.  
	%		\begin{align*}
		%		H(Z^A | B) + H(X^A | B) &\ge 1 + H(A | B) \\
		%		\implies H(Z^A | Z^B) + H(X^A | X^B) &\ge 1 + H(A | B) \\
		%		\implies H(Z^A | Z^B) + H(X^A | Z^B) - H(Z^A) - H(X^A) &\ge 1 + H(A | B) - H(Z^A) - H(X^A) \\
		%		\implies -I(Z^A : Z^B) - I(X^A : X^B) &\ge 1 + H(A | B) - H(Z^A) - H(X^A)\\
		%		\implies -I(Z^A : Z^B) - I(X^A : X^B) &\ge 1 + H(A | B) - (H(Z^A) + H(X^A)) \\	
		%		\implies -I(Z^A : Z^B) - I(X^A : X^B) &\ge 1 + H(A | B) - 1 - H(A) \\		
		%		\implies -I(Z^A : Z^B) - I(X^A : X^B) &\ge -I(A : B) \\
		%		\implies I(Z^A : Z^B) + I(X^A : X^B) &\le I(A : B). 
		%	\end{align*}
	%	At line x, we use the equality condition of maassen uffink to make sure the overall inequalities are correct and consequently we otain the cqc conjecture. This shows that for whatever states satisfy maassen uffink, the cqc conjecture holds, these are precicsely the states that the authors prove for, maximally mixed subsystems, states with minimal disturbance measurement. Although this same conclusion is presented in the original paper, we thought the connection with the equality condition of maassen-uffink is enlightening. 
	\begin{proposition}\label{prop:suffcqc}
		For any bipartite state $\rho_{AB}$ and two MUB measurements $\Z$ and $\X$, if 
		\begin{align}
			\izab - \izazb + I(X^A:B) - I(X^A: X^B)  \ge \log(d) - H(A), \label{eq:suffcqc}
		\end{align}
		then the CQC conjecture holds. 
	\end{proposition}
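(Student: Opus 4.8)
The plan is to rewrite both the hypothesis \eqref{eq:suffcqc} and the CQC inequality in terms of conditional entropies, introduce the information lost when the memory $B$ is measured in each basis, and then close the gap with an information-exclusion relation for quantum memory.

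First I would set $\Delta_Z := I(Z^A:B) - I(Z^A:Z^B)$ and $\Delta_X := I(X^A:B) - I(X^A:X^B)$. Using the identities $I(Z^A:B) = H(Z^A) - H(Z^A|B)$ and $I(Z^A:Z^B) = H(Z^A) - H(Z^A|Z^B)$ (and likewise in the $\X$ basis), these become $\Delta_Z = H(Z^A|Z^B) - H(Z^A|B)$ and $\Delta_X = H(X^A|X^B) - H(X^A|B)$, each non-negative by data processing since measuring $B$ is a channel acting on $B$. In this notation the hypothesis \eqref{eq:suffcqc} reads precisely $\Delta_Z + \Delta_X \ge \log(d) - H(A)$. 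Moreover $I(Z^A:Z^B) + I(X^A:X^B) = \big(I(Z^A:B) + I(X^A:B)\big) - (\Delta_Z + \Delta_X)$, so the CQC inequality $I(Z^A:Z^B) + I(X^A:X^B) \le I(A:B)$ is equivalent to
\begin{align*}
I(Z^A:B) + I(X^A:B) - I(A:B) \le \Delta_Z + \Delta_X .
\end{align*}

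Second, I would bound the left-hand side above in a state-independent way. Coles and Piani's improved information-exclusion relation \cite{coles2014improved}, specialized to two MUBs in dimension $d$ (where the overlap $c = \max_{i,j}|\langle z_i|x_j\rangle|^2 = 1/d$, so $\log(d^2 c) = \log d$), gives $I(Z^A:B) + I(X^A:B) \le \log(d) - H(A|B)$; alternatively this follows from Berta et al.'s relation \eqref{eq:berta} together with the trivial bounds $H(Z^A), H(X^A) \le \log d$. Since $I(A:B) = H(A) - H(A|B)$, the conditional-entropy terms cancel and the left-hand side above is at most $\log(d) - H(A)$. Chaining this with the hypothesis $\log(d) - H(A) \le \Delta_Z + \Delta_X$ yields the displayed inequality, hence the CQC conjecture.

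The only point needing care is quoting the Coles--Piani bound in the correct normalization for MUBs and checking that the post-measurement map on $B$ is a genuine channel; everything else is bookkeeping with $I(\cdot:\cdot) = H(\cdot) - H(\cdot\,|\,\cdot)$ and the definition of mutual information. In fact I expect the argument to use only the hypothesis and the information-exclusion bound, not the non-negativity of $\Delta_Z + \Delta_X$, so even the data-processing remark is dispensable; the real "obstacle" is simply selecting a sharp enough external inequality, and Coles--Piani (equivalently, Berta plus the trivial entropy bound) is exactly tight for this purpose.
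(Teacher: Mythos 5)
Your proposal is correct and follows essentially the same route as the paper: it invokes the Coles--Piani information-exclusion bound $I(Z^A:B)+I(X^A:B)\le \log(d)-H(A|B)$ specialized to MUBs, rewrites the gap between it and $I(A:B)=H(A)-H(A|B)$ as $\log(d)-H(A)$, and closes the argument with the hypothesis \eqref{eq:suffcqc}, exactly as in the paper's proof (your observation that the non-negativity of $\Delta_Z+\Delta_X$ via data processing is not actually needed is a minor, accurate streamlining).
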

	\begin{proof}
		In the article by Coles and Piani \cite{coles2014improved}, we see that for a bipartite state $\rho_{AB}$, if we apply two MUB measurement $\Z$ and $\X$ on subsystem $A$, then the sum of the two resulting mutual information has the following upper bound: 
		\begin{align}
			I(Z^A : B) + I(X^A : B)  \le \log(d) - H(A | B). \label{eq:cp}
		\end{align}
		Note that inequality \eqref{eq:cp} that we have presented here is a special case involving MUB measurements only, of the more general form of the equation that deals with arbitrary orthonormal bases from the article \cite{coles2014improved}. The general equation had the following form, written in our notations: 
		\begin{align*}
			I(Z^A : B) + I(X^A : B)  \le r - H(A | B), 
		\end{align*}
		where 
		\begin{align*}
			&r = min\{r(X, Z), r(Z, X)\}, \\
			&r(X, Z) = \log_2\left(d \sum_j \underset{k}{max} |\bk{x_j}{z_k}|^2\right), \\ 
			&r(Z, X) = \log_2\left(d \sum_k \underset{j}{max} |\bk{x_j}{z_k}|^2\right), 
		\end{align*}
		and the $x_j, z_k$ are the basis states of the $\X$ and $\Z$ basis measurements. Clearly, for MUBs, $r(Z, X) = r(X, Z) = \log(d)$ as $|\bk{x_j}{z_k}|^2 = \frac{1}{d}$ for MUBs, thus we get inequality \eqref{eq:cp}. From this inequality \eqref{eq:cp}, if we make a measurement on the B part, we get the following: 
		\begin{align}
			I(Z^A : Z^B)+ I(X^A : X^B)  \le \log(d) - H(A | B) \label{eq:cpm}, 
		\end{align}
		because of the data processing inequality \cite{wilde2011classical} which ensures $I(Z^A: Z^B) \le I(Z^A: B), I(X^A: X^B) \le I(X^A: B)$. On the other hand, the CQC conjecture \cite{schneeloch2014uncertainty} asks whether the following holds:
		\begin{align}
			I(Z^A : Z^B) + I(X^A : X^B) \le I(A : B). \label{eq:cqc}
		\end{align} 
		We also notice that $\log(d) - H(A | B) \ge I(A : B)$ always holds as $H(A) \le \log(d)$. These observations give us an idea about a possible sufficient condition. That is, going from inequality \eqref{eq:cp}, which is already known to be true, to inequality \eqref{eq:cqc}, which we are studying, we need to look at how much mutual information decreases on either sides of these two equations. If the reduction in mutual information on the left side of inequality \eqref{eq:cqc} is higher than the reduction on the right side, then the CQC conjecture would hold.  Let us start with the known inequality \eqref{eq:cp} and derive the CQC conjecture mentioned in inequality \eqref{eq:cqc}, assuming sufficient condition from inequality \eqref{eq:suffcqc}:
		\begin{align*}
			&I(Z^A : B) + I(X^A : B)  \le \log(d) - H(A | B) \\
			\implies &I(Z^A : B) + I(X^A : B) + \izazb - \izazb + \ixaxb - \ixaxb  \\
			&\le \log(d) - H(A | B) \text{ [adding and subtracting the same terms]}\\
			\implies & I(Z^A : B) - \izazb + I(X^A : B) - \ixaxb  + \izazb  + \ixaxb \\
			&\le \log(d) - H(A | B) \text{ [rearranging terms]}\\
			\implies &  \log(d) - H(A) + \izazb  + \ixaxb  \le \log(d) - H(A | B) \text{ [using sufficent condition \eqref{eq:suffcqc}]}\\
			\implies &  \izazb  + \ixaxb  \le \log(d) - H(A | B) - \log(d) + H(A) \\
			\implies &  \izazb  + \ixaxb  \le I(A:B).  \\
		\end{align*}
	\end{proof}
	
	\noindent As a demonstration of this sufficient condition, in figure \eqref{fig:suffcond2dim}, we consider a number of random density operators where each subsystem has dimension $d = 2$, which respect condition in \eqref{eq:suffcqc}, and are not pure, or entangled, or have a  maximally mixed subsystem, which are already known not to contradict CQC. So this sufficient condition adds a new class of states that are good for this conjecture. 
	%	For example, for the following state:
	%	\begin{align*}
		%		0.2976 + 0.0000i  -0.0486 - 0.0235i   0.0796 - 0.0209i  -0.1026 - 0.0085i \\
		%		-0.0486 + 0.0235i   0.1529 + 0.0000i   0.0121 + 0.1845i  -0.0127 - 0.0390i \\
		%		0.0796 + 0.0209i   0.0121 - 0.1845i   0.3159 + 0.0000i  -0.0478 + 0.0520i \\
		%		-0.1026 + 0.0085i  -0.0127 + 0.0390i  -0.0478 - 0.0520i   0.2336 + 0.0000i
		%	\end{align*}%
	\begin{figure}[h!]
		\centering
		\includegraphics[width=0.5\linewidth]{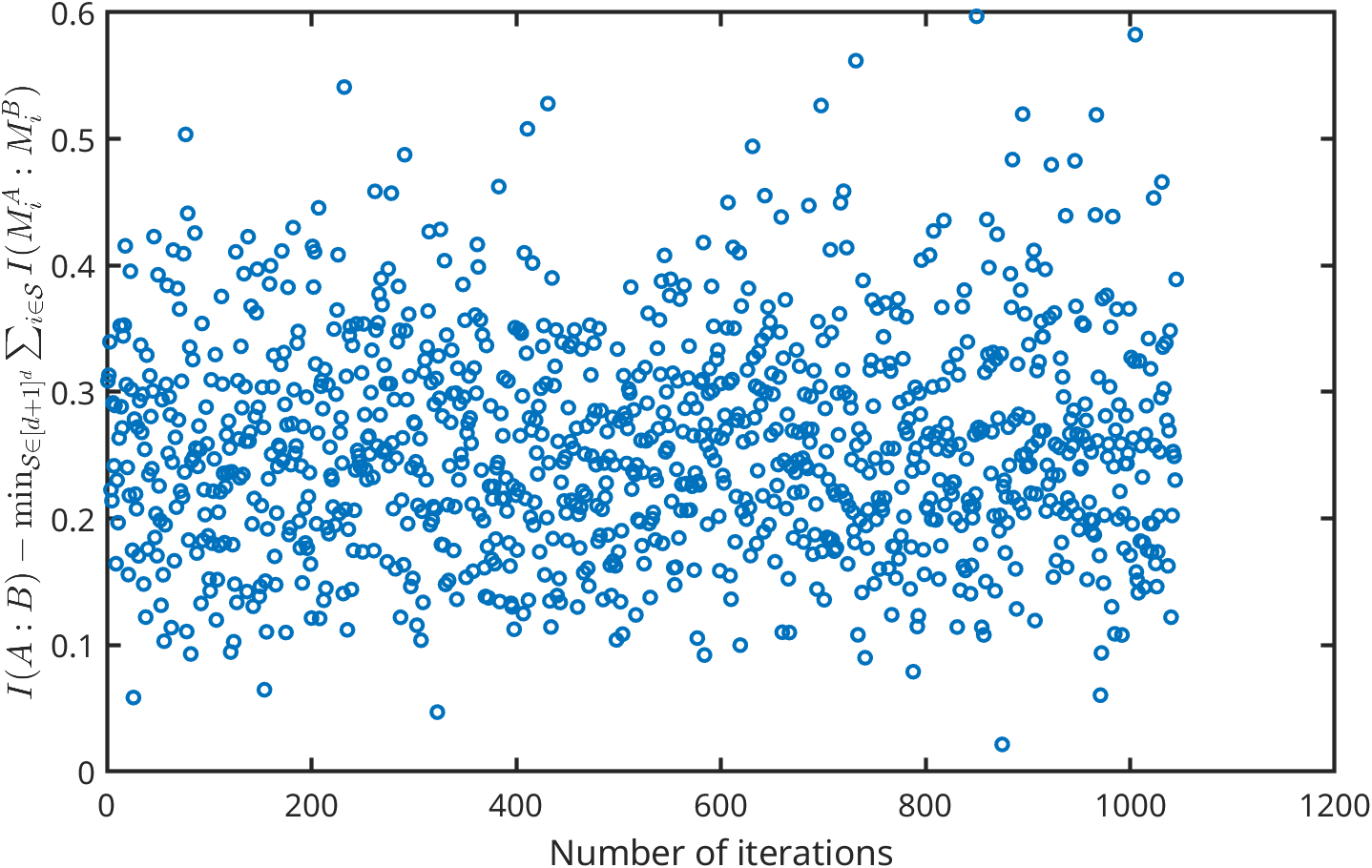}
		\caption{Simulating CQC conjecture with random separable mixed states that do not have a maximally mixed subsystem, but that respect inequality \eqref{eq:suffcqc}. This figure shows that the differences shown in the $Y$ axis are always non-negative.} 
		\label{fig:suffcond2dim}
	\end{figure}
	\noindent In this figure, we see that the $Y$ axis, which represents the difference between original mutual information before the measurement and the sum of mutual information resulting from two different MUB measurements, is never negative, signifying the validity of the CQC conjecture. 
	\newline\newline
	\noindent With this result in mind, we continue and find that there is a sufficient condition for the ECQC conjecture too, which ensures its validity for the cases when $A$ and $B$ both have the same prime dimension. We formally state this condition in the form of the following proposition. 
	\subsection{A sufficient condition for ECQC}
	\begin{proposition}\label{prop:suffecqc}
		For any bipartite state $\rho_{AB}$ where the subsystems $A$ and $B$ have prime dimension $d$, considering the set of MUBs  $\{M_i\}_{i = 1}^{d + 1}$ for dimension $d$, if 
		\small
		\begin{align}
			&\sum_{i = 1}^{d+1} I(M_i^A : B) - \sum_{i = 1}^{d + 1} I(M_i^A : M_i^B) 
			\ge \sum_{i = 1}^{d + 1} H(M_i) - \frac{d + 1}{2} \log(d) - \frac{d + 1}{2} H(A | B) - I(A:B), \label{eq:suffconecqc}
		\end{align}
		then ECQC holds. 		
	\end{proposition}
	
	\begin{proof}
		We follow the same strategy that gave us the sufficient condition for CQC, but for higher dimensional subsystems. That is, we start with the generalization of the Berta's uncertainty relation \cite{berta2010uncertainty} for prime dimensions which Xie  et al. obtained in \cite{xie2021optimized}. They showed the following lower bound on the conditional entropy given a quantum register, which we later use to arrive at a desired upper bound for the total mutual information when only one party measures in the MUBs. We state the inequality in a slightly different notation than used in the original work, which takes a special form for MUBs:
		\begin{align}
			&\sum_{i = 1}^{d + 1} H(M_i^A | B)  \ge \frac{d + 1}{2} \log d + \frac{d + 1}{2} H(A | B). \label{eq:xiefirst}
		\end{align}
		From this inequality \eqref{eq:xiefirst}, we have that:
		\begin{align}
			&\implies \sum_{i = 1}^{d + 1} H(M_i) - \sum_{i = 1}^{d + 1} I(M_i^A : B) \ge \frac{d + 1}{2} \log d + \frac{d + 1}{2} H(A | B) \nonumber\\
			&\implies \sum_{i = 1}^{d + 1} I(M_i^A : B) \le  \sum_{i = 1}^{d + 1} H(M_i) - \frac{d + 1}{2} \log d - \frac{d + 1}{2} H(A | B).   \label{eq:xie2}
		\end{align}
		Note that in the same article by Xie et al., they also present an optimized version of this inequality \eqref{eq:xiefirst}, where an additional term is added on the right hand side, and has the following form:
		\begin{align}
			&\sum_{i = 1}^{d + 1} H(M_i^A | B)  \ge \frac{d + 1}{2} \log d + \frac{d + 1}{2} H(A | B) + \max\{0, \delta_m\}. \label{eq:xieoptimized}
		\end{align}
		for $\delta_m = \frac{d + 1}{2}I(A : B) - \sum_{i = 1}^{d + 1} I(M_i^A : B)$. However, this optimized version is not applicable to our case as the two terms of $\delta_m$ has different multipliers. If the first term of $\delta_m$, without the multiplier $(d + 1)/2$ is already larger than the second term, then both CQC and ECQC hold trivially, and with the multiplier, this form is not compatible with our scenario as we have a single $\iab$ term, and we exclude it from our analysis. 
		\newline\newline
		Now, we mention our conjecture mentioned in inequality \eqref{conj:ecqc} for the clarity of derivation:
		\begin{align}
			I(A : B) \ge \underset{\mathcal{S} \in [d + 1]^d}{\min} \sum_{i \in \mathcal{S}} I(M^A_i : M^B_i).    \label{eq:conj2}
		\end{align}			
		Note that, because our conjecture considers $d$ measurements out of $d + 1$, and excludes the maximum mutual information obtained by one of the $d + 1$ MUB measurements, adding that quantity will give us the total $d + 1$ mutual information, and if one party is not measured, then the mutual information could only be higher because of the data processing inequality. So from the right hand side of inequality \eqref{eq:conj2} we have that:
		\begin{align}
			\underset{\mathcal{S} \in [d + 1]^d}{\min} \sum_{i \in \mathcal{S}} I(M^A_i : M^B_i) \le \sum_{i = 1}^{d + 1} I(M_i^A : M_i^B)  \le \sum_{i = 1}^{d + 1} I(M_i^A : B) \label{eq:conj2xie}. 
		\end{align}
		Now, we follow a similar strategy that we used for the sufficient condition for the original CQC conjecture. We start from inequality \eqref{eq:xie2} by Xie et al. and then arrive at the ECQC conjecture, assuming the sufficient condition from proposition \eqref{prop:suffecqc}. 
		\begin{align*}
			&\sum_{i = 1}^{d + 1} I(M_i^A : B) \le  \sum_{i = 1}^{d + 1} H(M_i) - \frac{d + 1}{2} \log d - \frac{d + 1}{2} H(A | B) \\
			\implies &\sum_{i = 1}^{d + 1} I(M_i^A : B) - \sum_{i = 1}^{d + 1} I(M_i^A : M_i^B) + \sum_{i = 1}^{d + 1} I(M_i^A : M_i^B) \\
			&\le  \sum_{i = 1}^{d + 1} H(M_i) - \frac{d + 1}{2} \log d - \frac{d + 1}{2} H(A | B) \text{ [adding and subtracting the same terms]}\\
			\implies &\sum_{i = 1}^{d + 1} H(M_i) - \frac{d + 1}{2} \log(d) - \frac{d + 1}{2} H(A | B) - I(A:B) + \sum_{i = 1}^{d + 1} I(M_i^A : M_i^B) \\
&\le  \sum_{i = 1}^{d + 1} H(M_i) - \frac{d + 1}{2} \log d - \frac{d + 1}{2} H(A | B) \text{  [using sufficient condition \eqref{eq:suffconecqc}]}\\
\implies & I(A:B) \ge \sum_{i = 1}^{d + 1} I(M_i^A : M_i^B) \\
\implies &I(A:B) \ge \underset{\mathcal{S} \in [d + 1]^d}{\min} \sum_{i \in \mathcal{S}} I(M^A_i : M^B_i) \text{[ from inequality \eqref{eq:conj2xie}]}. 
		\end{align*}
%		We consider inequalities \eqref{eq:xie2}, \eqref{eq:conj2}, and \eqref{eq:conj2xie} and notice that if the reduction of total mutual information from $\sum_{i = 1}^{d + 1} I(M_i^A : B)$ to $\sum_{i = 1}^{d + 1} I(M_i^A : M_i^B)$ is at least the difference between the right side of \eqref{eq:xie2} and $I(A:B)$, the we would arrive at proposition \eqref{prop:suffecqc}. 
	\end{proof}
	
	%	\paragraph{title}
	%	In the following, we simulate the ECQC for the states that respects the condition set by \eqref{prop:suffecqc}. Again, we stress that these states that we simulate here are neither pure, nor has maximally  mixed subystems and they are not the eigenstates of any of the MUBs that we are considering here. 
	%	\td simulation with states that respects high-dimensional suff 
	%	In dim 3: the following graph:

	\subsection{Attempt to establish ECQC directly for isotropic states}	
	In the original paper, Schneeloch et al. \cite{schneeloch2014uncertainty} showed that the CQC conjecture holds for pure states, states with one maximally  mixed subsystem and if one of the measurements is minimally disturbing for one of the parties. The validity of the CQC conjecture for the latter two types of states was shown by relating them to Berta et al.'s EU relation mentioned in inequality \eqref{eq:berta}. 
	%	\begin{align}
		%		H(Z^A : B) + H(X^A : B) \ge \log(d) + H(A | B). \label{eq:berta}
		%	\end{align} 
	With some algebra, the authors reformulate the CQC conjecture to arrive at:
	\begin{align}
		H(Z^A | Z^B) + H(X^A | X^B) &\ge \log(d) + H(A | B) + \underbrace{[H(Z^A) + H(X^A) - \log(d) - H(A)]}_{\kappa_1}. \label{eq:cqcreform}
	\end{align} 
	In inequality \eqref{eq:cqcreform}, we denote the terms inside the parentheses `[]' by $\kappa_1$. The authors point out that if $\kappa_1 = 0$, then the CQC conjecture holds as we can measure the second subsystem in Berta et al.'s EU relation involving quantum memories to arrive at \eqref{eq:cqcreform}. Here, we recall Maassen-Uffink uncertainty relation \cite{maassen1988generalized, Coles2017Feb}:
	\begin{align*}
		H(Z^A) + H(X^A) \ge \log(d) + H(A).
	\end{align*}
	It is easy to see that states makes this Maassen-Uffink EU relation an equality, namely, states which are equal to the eigenvectors of those two measurement bases, also satisfies the CQC conjecture as that would make $\kappa_1 =  0$. 
	%	That is, if we consider the $d + 1$ quantities of $I(M^A_i : M^B_i)$ and sum up $d$ of those such that we get a minimum, the mutual information $I(A:B)$ should be larger than this minimum. 	
	%	 We will make a small comment about the composite dimension $4$ later, as in this dimension, this extension fails.
	%	 
	In the following, we show that this strategy is not applicable to ECQC. That is, even though we can arrive at a similar inequality as \eqref{eq:cqcreform}, ECQC's validity cannot be established for similar states. We start by assuming the validity of the ECQC conjecture:
	\begin{align}
		&\underset{\mathcal{S} \in [d + 1]^d}{\min} \sum_{i \in \mathcal{S}} I(M^A_i : M^B_i) \le 		I(A : B) \nonumber\\
		&\implies \sum_{i = 1}^{d + 1} I(M^A_i : M^B_i) = I(A : B) + I_{max}(M^A : M^B) \nonumber\\
		&\implies \sum_{i = 1}^{d + 1} H(M_i^A) - \sum_{i = 1}^{d + 1} H(M_i^A | M_i^B) \le I(A : B) + I_{max}(M^A : M^B) \nonumber\\
		&\implies \sum_{i = 1}^{d + 1} H(M_i^A | M_i^B) \ge \sum_{i = 1}^{d + 1} H(M_i^A) - I(A : B) - I_{max}(M^A : M^B). \label{eq:ecqcberta}
	\end{align}
	In the second line of the derivation, we add $I_{max}(M^A : M^B)$ to both sides, which in our notation is the maximum mutual information obtained by any MUB measurement basis. Now we restate Xie et al.'s result \cite{xie2021optimized} that generalizes Berta et al.'s EU relation involving conditional entropy with quantum memories \eqref{eq:berta}:
	\begin{align}
		\sum_{i = 1}^{d + 1} H(M_i^A | B)  &\ge \frac{d + 1}{2} \log d + \frac{d + 1}{2} H(A | B). \label{eq:xie}
	\end{align}
	%	Because $\sum_{i = 1}^{d + 1} H(M_i^A | B) \le \sum_{i = 1}^{d + 1} H(M_i^A | M_i^B)$ as measurement cannot decrease entropy, from inequality \eqref{eq:xie}, we have: 
	%	\[
	%	\sum_{i = 1}^{d + 1} H(M_i^A | M_i^B) \ge \sum_{i = 1}^{d + 1} H(M_i^A | B) \ge \frac{d + 1}{2} \log d + \frac{d + 1}{2} H(A | B).
	%	\]
	We add and subtract the right hand side of inequality \eqref{eq:xie} in the right hand side of inequality \eqref{eq:ecqcberta} to arrive at:
	\begin{align}
		&\sum_{i = 1}^{d + 1} H(M_i^A | M_i^B) \ge \frac{d + 1}{2} \log d + \frac{d + 1}{2} H(A | B) \nonumber\\ 
		& \underbrace{\left[ - \frac{d + 1}{2} \log d - \frac{d + 1}{2} H(A | B) + \sum_{i = 1}^{d + 1} H(M_i^A) - I(A : B) - I_{max}(M^A : M^B).\right]}_{\kappa_2} \label{eq:ecqcparen}
	\end{align}	
	We comment that the structural similarity between equations \eqref{eq:cqcreform} and  \eqref{eq:ecqcparen} is noticeable. In inequality \eqref{eq:ecqcparen}, we denote the terms inside the parentheses `[]' by $\kappa_2$. If $\kappa_2 = 0$, that is:
	\begin{align}
		\frac{d + 1}{2} \log d + \frac{d + 1}{2} H(A | B) + I(A : B) + I_{max}(M^A : M^B) = \sum_{i = 1}^{d + 1} H(M_i^A), \label{eq:exequality} % extended equality
	\end{align}
	then ECQC holds based on Xie et al.'s result \eqref{eq:xie}. It is easy to see that that states of the form $\rho_{AB} = \rho_{A} \otimes \rho_{B}$ where $\rho_{A} = \frac{\mbi}{d}$ satisfies inequality \eqref{eq:exequality} because for these product states, the left and right hand sides become $\frac{d + 1}{2}\log_2(d) + \frac{d + 1}{2} \log_2(d) + 0 + 0 = (d + 1) \log_2(d)$. Note that the maximally entangled state, which of course has maximally mixed subsystems, does not respect inequality \eqref{eq:exequality}, as in that case we get, $H(A | B) = -\log(d), I(A : B) = 2 \log(d), I_{max}(M^A : M^B) = H(M^A) + H(M^B) - H(M^AM^B) = \log(d)$, resulting in  $3 \log(d) \ne (d+1) \log(d)$, for prime $d \ge 3$. So we cannot decide whether ECQC holds  for maximally entangled states based on inequality \eqref{eq:ecqcparen} being true, unlike the CQC conjecture, where the equality condition of a similar equation sufficed. With this method's insufficiency, let us calculate the validity of ECQC for Isotropic states directly, which are maximally entangled for a certain parameter range. 
	
	\subsection*{Isotropic states}
	In this section, we show the validity of ECQC for isotropic states with prime dimensions by an explicit calculation. Isotropic states are defined as \cite{springer1972springer, terhal2000entanglement}:
	\begin{align}
		\rho_{AB} = p \times \kb{\Psi^+}{\Psi^+}_{AB} + (1 - p) \times \frac{\mathbb{I}_{AB}}{d^2}, \label{eq:isotropic}
	\end{align}
	where $\ket{\Psi^+}_{AB} = \frac{1}{\sqrt{d}} \sum_{i = 0}^{d - 1} \ket{ii}_{AB}$, $\frac{-1}{d^2 - 1} \le p \le 1$, $d \ge 3$ is a prime. For any $p < 1$ in the definition of isotropic states in inequality \eqref{eq:isotropic}, we see that for inequality \eqref{eq:exequality}, the $l.h.s < r.h.s$, which results in $\kappa_2 > 0$. Consequently, we cannot recover Xie et al.'s \cite{xie2021optimized} result mentioned in inequality \eqref{eq:xie} conclusively in inequality \eqref{eq:ecqcparen}. If it is possible to tighten Xie et al.'s result further, in that case this could be a feasible way to show the validity of ECQC conjecture in this case. So we must perform a direct calculation for isotropic states to see if the ECQC conjecture holds. 
	
	%\subsection{Calculation for isotropic states with prime dimensional subsystems}
	In order to calculate the mutual information in this state, we have to identify the eigenvalues of $\rho_{AB}$ as well as of the subsystems $A$ and $B$. Below, we denote the vector of eigenvalues of $\rho_{AB}$ by $\text{eig}(\rho_{AB})$. 
	\begin{align*}
		\text{eig}(\rho_{AB}) = p \times \text{eig}\left(\kb{\Psi^+}{\Psi^+}_{AB}\right) + (1 - p) \times \text{eig}\left(\frac{\mathbb{I}_{AB}}{d^2} \right). 
	\end{align*}
	We know the eigenvalues of the maximally entangled state $\kb{\Psi^+}{\Psi^+}_{AB}$ and the completely mixed state $\frac{\mathbb{I}_{AB}}{d^2}$. From this, we easily see that the $d^2$ number of eigenvalues of $\rho_{AB}$ are $\text{eig}(\rho_{AB}) =\{p + \frac{1 - p}{d^2}, \frac{1 - p}{d^2}, \frac{1 - p}{d^2}, ..., \frac{1 - p}{d^2}\}$. So we get the following joint entropy of $\rho_{AB}$ where both subsystems $A$ and $B$ have dimension $d$:
	\begin{align*}
		H(\rho_{AB}) = - \left(p + \frac{1 - p}{d^2}\right) \log \left(p + \frac{1 - p}{d^2}\right) - \sum_{i = 1}^{d^2 - 1} \frac{1 - p}{d^2} \log \frac{1 - p}{d^2}.
	\end{align*}
	The subsystems of isotropic states $A$ and $B$ also has known eigenvalues, as they are simply the maximally mixed states with eigenvalues $\frac{1}{d}$ as long as $\frac{1}{d + 1} < p \le 1$, in which range, isotropic states are entangled \cite{krammer2005quantum, yang2021decompositions}. Even in the range of $p$ where the state is separable, (that is $-\frac{1}{d^2 - 1} \le p \le \frac{1}{d + 1}$), the partial traces $A$ and $B$ are still maximally mixed, which can be seen from direct calculation. 
	\begin{align*}
		\Tr_A(\rho_{AB}) &= \Tr_A\left(p \times \kb{\Psi^+}{\Psi^+}_{AB} + (1 - p) \times \frac{\mathbb{I}_{AB}}{d^2}  \right) \\
		&= \left(\frac{p}{d} + \frac{1 - p}{d}  \right) \mbi_B  = \frac{\mbi_B}{d},
	\end{align*}
	and we can see that $\Tr_B(\rho_{AB}) = \frac{\mbi_A}{d} $ in a similar way. So in these states,
	\begin{align*}
		\text{eig}(A) = \text{eig}(B) = \left(\frac{1}{d}, \frac{1}{d},..., \frac{1}{d}\right),
	\end{align*}
	for all $-\frac{1}{d^2 - 1} \le p \le 1$. From these calculations, we see that $H(A) = H(B) = \log(d)$, and the mutual information for isotropic states for any prime dimensions is: 
	\begin{align}
		I(A : B) &= H(A) + H(B) - H(\rho_{AB}) \nonumber\\
		&=2  \log(d) - \left(- \left(p + \frac{1 - p}{d^2}\right) \log \left(p + \frac{1 - p}{d^2}\right) - \sum_{i = 1}^{d^2 - 1} \frac{1 - p}{d^2} \log \frac{1 - p}{d^2}\right) \nonumber\\
		&= 2  \log(d) + \frac{p(d^2 - 1) + 1}{d^2} \log \left( \frac{p(d^2 - 1) + 1}{d^2} \right) + \sum_{i = 1}^{d^2 - 1} \frac{1-p}{d^2} \log \frac{1-p}{d^2} \label{eq:genmutinf}. 
	\end{align}
	Now we calculate the mutual information in the computational $\Z$ basis measurement, which is part of the MUBs. Because isotropic states have maximally mixed subsystems, the entropy of these subsystems after the measurement is still maximum. That is, $H(M_i^A) = H(M_i^B)$ for any MUB measurement $M_i$. However, the mutual information of the post-measurement state does change, and we calculate that below. 
	\subsubsection{Mutual information in isotropic states for dimension 3}
	For dimension 3, we consider the four MUBs given in \cite{brierley2009mutually}. These matrices, whose columns represent the basis states of the respective basis, are the following:
	\small
	\begin{align}
		M_0 = 
		\begin{bmatrix}
			1 & 0 & 0 \\
			0 & 1 & 0 \\
			0 & 0 & 1
		\end{bmatrix}, 
		M_1 = \frac{1}{\sqrt{3}} 
		\begin{bmatrix}
			1  & 1 & 1 \\
			1 & \om & \om^2 \\
			1 & \om^2 & \om
		\end{bmatrix}, 
		M_2 = \frac{1}{\sqrt{3}} 
		\begin{bmatrix}
			1 & 1 & 1 \\
			\om^2 & 1 & \om \\
			\om^2 & \om & 1
		\end{bmatrix},
		M_3 = \frac{1}{\sqrt{3}} 
		\begin{bmatrix}
			1 & 1 & 1 \\
			\om & \om^2 & 1 \\
			\om & 1 & \om^2	
		\end{bmatrix} \label{eq:dim3mats}	
	\end{align}
	Clearly, these matrices follow the restriction of being mutually unbiased to each other, as taking the absolute value squared of the inner product of any pair of columns from two different matrices here, produces $\frac{1}{3}$. That is, if $\ket{x} \in M_i$ and $\ket{y} \in M_j$ are any two columns for $i \ne  j$ in inequality \eqref{eq:dim3mats}, we have $|\bk{x}{y}|^2 = \frac{1}{3}$. 
	\paragraph{Calculation of $\izazb$:}
	%	To calculate the mutual information of $\rho_{Z^{AB}}$, that is, $\izazb$,
	First let us calculate the mutual information of $\rho_{Z^{AB}}$, which is the state obtained after $A$ and $B$ make measurements on their individual subsystems in basis $\Z$. As per the definition of mutual information: 
	\begin{align*}
		I(Z^A : Z^B) = H(Z^A) + H(Z^B) - H(Z^A Z^B), 
	\end{align*}
	where, $Z^A$ and $Z^B$ are the resulting random variables of their individual measurements. These random variables of the subsystems has maximum entropy as discussed in the previous section, that is (for $d = 3$): 
	\begin{align*}
		H(Z^A) = H(Z^B) = \log(3).
	\end{align*}
	So to get $\izazb$, we calculate the joint entropy term $H(Z^A Z^B)$, which is:
	\begin{align}
		%		H(Z^A Z^B) = H([p_{00}, p_{01}, p_{02}, p_{10}, p_{11}, p_{12}, p_{20},p_{21}, p_{22} ]). 
		H(Z^A Z^B) = - \sum_{i = 0, j = 0}^{2} p_{ij} \log (p_{ij}), \label{eq:jointzps}
	\end{align}
	where each $p_{ij}$ represent the probabilities of Alice and Bob's measurement outcomes to be any of the basis states in the $\Z$ basis measurement. Our task then reduces to calculating these $p_{ij}$ values in inequality \eqref{eq:jointzps}. We can directly calculate them as we show below. We see that $A$ and $B$ could only have outcomes $\ket{0}, \ket{1}, \ket{2}$ in their $\Z$ basis measurement as we are considering $d = 3$. We start with the calculation of the probability, that their individual measurements in this basis on isotropic state \eqref{eq:isotropic}, both produced $\ket{0}$, denoted by $p_{00}$, considering $\ket{\Psi^+}_{AB} = \frac{1}{\sqrt{3}} (\ket{00} + \ket{11} + \ket{22})$. 
	\begin{align*}		
		p_{00} &= p \bra{00} \kb{\Psi^+}{\Psi^+}_{AB} \ket{00} + (1 - p) \bra{00} \frac{\mathbb{I}_{AB}}{9} \ket{00}\\
		&= p \bra{00} \frac{1}{3} (\kb{00}{00} + \kb{00}{11} + \kb{00}{22} + \kb{11}{00} + \kb{11}{11} + \kb{11}{22} + ...\\
		&\kb{22}{00} + \kb{22}{11} + \kb{22}{22})\ket{00} + (1 - p) \bra{00} \frac{\mathbb{I}_{AB}}{9} \ket{00}\\
		&= \frac{p}{3} \times (\bk{00}{00} \bk{00}{00}+ \bk{00}{00} \bk{11}{00} + \bk{00}{00} \bk{22}{00} + ... \\
		&\bk{00}{11}\bk{00}{00} + \bk{00}{11}\bk{11}{00} + \bk{00}{11}\bk{22}{00} + ...\\
		&\bk{00}{22}\bk{00}{00} + \bk{00}{22}\bk{11}{00} + \bk{00}{22}\bk{22}{00}) + (1 - p) \bra{00} \frac{\mathbb{I}_{AB}}{9} \ket{00} \\
		&= \frac{p}{3} + \frac{1 - p}{9}. 
	\end{align*} 
	Similarly, we can also calculate the cases $p_{11}$ and $p_{22}$ which would be the same as $p_{00}$. That is, $$p_{00} = p_{11}  = p_{22} = \frac{p}{3} + \frac{1 - p}{9}.$$Now we focus on the probabilities of their mismatched measurement outcomes in this basis, denoted by $p_{ij}$ for $i \ne j$. For example, if we want $p_{01}$: 
	\begin{align*}		
		p_{01} &= p \bra{01} \kb{\Psi^+}{\Psi^+}_{AB} \ket{01} + (1 - p) \bra{01} \frac{\mathbb{I}_{AB}}{9} \ket{01}\\
		&=p \bra{01} \frac{1}{3} (\kb{00}{00} + \kb{00}{11} + \kb{00}{22} + \kb{11}{00} + \kb{11}{11} + \kb{11}{22} + ...\\
		&\kb{22}{00} + \kb{22}{11} + \kb{22}{22})\ket{01} + (1 - p) \bra{01} \frac{\mathbb{I}_{AB}}{9} \ket{01}. 
	\end{align*}
	Note that, for terms of the form $\kb{ii}{jj}$, $\bra{01}(\kb{ii}{jj}) \ket{01} = \bk{01}{ii} \bk{jj}{01} = 0$, as both of these inner products are zero for any $i, j \in \{0, 1, 2\}$. So we get:
	\begin{align*}
		p_{01} = (1 - p) \bra{01} \frac{\mathbb{I}_{AB}}{9} \ket{01} = \frac{1 - p}{9}.		
	\end{align*}
	With the same reasoning, we see that for any such pair $i \ne j$, $p_{ij} = \frac{1 - p}{9}$.
	Aggregating the probabilities for all $i, j \in \{0, 1, 2\}$, we see that for isotropic states in dimension $3$, the $\Z$ basis measurement outcome probabilities are: 
	\begin{align*}
		p_{ii} = \frac{p}{3} + \frac{1-p}{9}, \text{ } p_{i \ne j} = \frac{1-p}{9}. 
	\end{align*}
	From these probabilities, we calculate the mutual information of the post-measurement state after $A$ and $B$ make the $\Z$ basis measurement, namely $\rho_{Z^{AB}}$, denoted by $I\left(Z^A : Z^B\right)$. 
	\begin{align} 
		I\left(Z^A : Z^B\right) &= 2 \log(3) -  \left(-\sum_{i = 0}^{2} p_{ii} \log \left(p_{ii}\right) - \sum_{(i \ne j) =0 }^{2} p_{ij} \log \left(p_{ij}\right)\right) \nonumber\\
		&= 2 \log(3) -  \left(-3 \times \left(\frac{p}{3} + \frac{1-p}{9}\right)  \log \left(\frac{p}{3} + \frac{1-p}{9}\right) - 6 \times \frac{1-p}{9} \log \left(\frac{1-p}{9}\right) \right) \nonumber\\
		%		&= 2 \log(3) +  3 \times \left(\frac{\left(1 - p\right)}{3} + \frac{p}{9}\right)  \log \left(\frac{\left(1 - p\right)}{3} + \frac{p}{9}\right) + 6 \times \frac{p}{9} \log \left(\frac{p}{9}\right)	\nonumber\\	
		&= 2 \log(3) +  3 \times \left(\frac{2 p + 1}{9} \right)  \log \left(\frac{2p + 1 }{9} \right) + 6 \times \frac{1 - p}{9} \log \left(\frac{1-p }{9}\right). \label{eq:izazb}
	\end{align} 
	This concludes the calculation of $I\left(Z^A : Z^B\right)$. 
	\paragraph{Calculation of $I\left(X^A : X^B\right)$:}
	\noindent Now let us focus on the mutual information of the post measurement state $\rho_{X^{AB}}$, resulted from their $\X$ basis measurement. This quantity is denoted by $I\left(X^A : X^B\right)$. The calculation in this case is quite similar to the previous calculation for basis $\Z$. Here the possible measurement outcomes are, in other words, the basis vectors of the measurement $\X$ are: 
	\begin{align*}
		\ket{f_0} \equiv \mathcal{F} \ket{0} = \frac{1}{\sqrt{3}}(1, 1, 1), \ket{f_1} \equiv \mathcal{F} \ket{1} = \frac{1}{\sqrt{3}}(1, \omega,
		\omega^2), \ket{f_2}  \equiv \mathcal{F} \ket{2} = \frac{1}{\sqrt{3}}(1, \omega^2, \omega),	
	\end{align*}	
	where $\mathcal{F}$ is the Fourier matrix and the root of unity $\omega = e^{\frac{2 \pi i}{3}}$. First we calculate the probability that both $A$ and $B$ obtains outcome $\ket{f_0}$, denoted by $p_{f_0f_0}$. We see that
	\begin{align}
		p_{f_0f_0} &= p \bra{f_0f_0} \kb{\Psi^+}{\Psi^+}_{AB} \ket{f_0f_0} + (1 - p) \bra{f_0f_0} \frac{\mathbb{I}_{AB}}{9} \ket{f_0f_0} \nonumber\\
		&=p \bra{f_0f_0} \frac{1}{3} (\kb{00}{00} + \kb{00}{11} + \kb{00}{22} + \kb{11}{00} + \kb{11}{11} + \kb{11}{22} + ...\nonumber\\
		&\kb{22}{00} + \kb{22}{11} + \kb{22}{22})\ket{f_0f_0} + (1 - p) \bra{f_0f_0} \frac{\mathbb{I}_{AB}}{9} \ket{f_0f_0}. \label{eq:calcpf00part}
	\end{align}
	The full calculation is not too enlightening to write down, so we focus on a particular part of it below. Note that $\ket{f_0} = \frac{1}{\sqrt{3}}(\ket{0} + \ket{1} + \ket{2})$, so for any $i, j \in \{0, 1, 2\}$, the inner product between $\ket{i}, \ket{j}$ and $\ket{f_0}$ will simply be $\frac{1}{\sqrt{3}}$:
	\begin{align*}
		&\bra{f_0f_0}\kb{ii}{jj}\ket{f_0f_0} = \bk{f_0}{i}\bk{f_0}{i}\bk{j}{f_0}\bk{j}{f_0} = \left( \frac{1}{\sqrt{3}}\right)^4  = \frac{1}{9}.
	\end{align*}
	Summing up these terms in the calculation of $p_{f_0f_0}$ in inequality \eqref{eq:calcpf00part}, we get:
	\begin{align*}
		p_{f_0f_0} = \frac{p}{3} \left(9 \times\frac{1}{9}\right) + \frac{1 - p}{9} = \frac{p}{3} + \frac{1 - p}{9}.	
	\end{align*}
	%	...$\bk{f_2f_1}{ii}\bk{jj}{f_2f_1}$, 
	Let us look at one other term where $(i + j) \% 3 = 0$, namely, $p_{f_1f_2}$. For this probability term, $\bk{f_1f_2}{00}\bk{00}{f_1f_2} = |\bk{f_1f_2}{00}|^2 = \frac{1}{9}$, $|\bk{f_1f_2}{11}|^2 = |\bk{f_1f_2}{22}|^2 = |\frac{\omega^3}{3}|^2 = \frac{1}{9}$.  For the terms $\bk{f_1f_2}{ii}\bk{jj}{f_1f_2}$,  where $i \ne j$, also gives us $\frac{1}{9}$ as we can easily see. For example, $\bk{f_1f_2}{00}\bk{11}{f_1f_2} = \bk{f_1}{0}\bk{f_2}{0}\bk{1}{f_1}\bk{1}{f_2} = (\frac{1}{\sqrt{3}})^4 \times 1 \times 1 \times \omega \times \omega^2 =  (\frac{1}{\sqrt{3}})^4 \times \omega^3 = \frac{1}{9}$ and $\bk{f_1f_2}{22}\bk{11}{f_1f_2} = \bk{f_1}{2}\bk{f_2}{2}\bk{1}{f_1}\bk{1}{f_2} = (\frac{1}{\sqrt{3}})^4 \times \overline{\om^2} \times \overline{\om} \times \om \times \om^2 = (\frac{1}{\sqrt{3}})^4 \times \omega \times \omega^2 \times \omega \times \omega^2 =  (\frac{1}{\sqrt{3}})^4 \times \omega^3 \times \omega^3 = \frac{1}{9}$, noting that, $\om^2 = \overline{\om}$ and $\om = \overline{\om^2}$, the complex conjugates and of course, $\om^3 = 1$. We can perform a similar calculation for $p_{f_2f_1}$ and arrive at the following conclusion:
	\begin{align*}
		p_{f_0f_0} = p_{f_1f_2} = p_{f_2f_1} = \frac{p}{3}\left(3 \times \frac{1}{9} + 6 \times \frac{1}{9}\right) + \frac{1 - p}{9} \left(9 \times \frac{1}{9}\right) = \frac{p}{3}  + \frac{1-p}{9}. 
	\end{align*}
	After considering the cases where $(i + j) \% 3 = 0$, let us see the cases where this is not true, that is, the terms where $(i + j) \% 3 \ne 0$. For example, the probability that $A$'s measurement outcome is $\ket{f_0}$ and $B$'s measurement outcome is $\ket{f_1}$, when they measure the isotropic state in this dimension:
	\begin{align}
		p_{f_0f_1} &= p \bra{f_0f_1} \kb{\Psi^+}{\Psi^+}_{AB} \ket{f_0f_1} + (1 - p) \bra{f_0f_1} \frac{\mathbb{I}_{AB}}{9} \ket{f_0f_1} \nonumber\\
		&=p \bra{f_0f_1} \frac{1}{3} (\kb{00}{00} + \kb{00}{11} + \kb{00}{22} + \kb{11}{00} + \kb{11}{11} + \kb{11}{22} + ...\nonumber\\
		&\kb{22}{00} + \kb{22}{11} + \kb{22}{22})\ket{f_0f_1} + (1 - p) \bra{f_0f_1} \frac{\mathbb{I}_{AB}}{9} \ket{f_0f_1}. \label{eq:calcpf00part2}
	\end{align}
	Note that the three terms of the form $\bra{f_0f_1}\kb{ii}{ii}\ket{f_0f_1} = |\bk{f_0f_1}{ii}|^2$ for $i \in \{0, 1, 2\}$ produces $|\frac{1}{3}|^2$, $|\frac{\omega}{3}|^2$, $|\frac{\omega^2}{3}|^2$ respectively, which are all equal to $\frac{1}{9}$. The other cases however, when we consider terms like $\bra{f_0f_1}\kb{ii}{jj}\ket{f_0f_1}$ for $i \ne j$, we get $\bra{f_0f_1}\kb{ii}{jj}\ket{f_0f_1} + 
	\bra{f_0f_1}\kb{jj}{ii}\ket{f_0f_1}= -\frac{1}{9}$ for $i, j \in \{0, 1, 2\}$, which can be seen by direct calculation. For example, 
	\begin{align*}
		\bra{f_0f_1}\kb{00}{11}\ket{f_0f_1} +  \bra{f_0f_1}\kb{11}{00}\ket{f_0f_1} = \frac{\om + \om^2}{9}= -\frac{1}{9},
	\end{align*} 
	from the sum of roots of unity property. So for the six terms where $i \ne j$, we get three $-\frac{1}{9}$. These negative quantities cancel out the three quantities where $i = j$. This results in the probability $p_{f_if_j}$ of outcomes where $(i + j) \% 3 \ne 0$ being a constant and equal to $\frac{1 - p}{9}$. Aggregating these probabilities, we see that:
	\begin{align*}
		p_{f_if_j} = \frac{p}{3} + \frac{1 - p}{9}, \text{ if }(i + j) \% 3 = 0, \text{ and } p_{f_if_j} = \frac{1 - p}{9} \text{ otherwise,}
	\end{align*}
	for $i, j \in \{0, 1, 2\}$. With the similar reasoning for $Z^A$ and $Z^B$, we see that $X^A$ and $X^B$ which are the resulting random variables after $A$ and $B$'s individual measurements, are still maximally mixed and we have the equal mutual information in this basis as the basis $\Z$:
	\begin{align} 
		I\left(X^A : X^B\right) &= 2 \log(3) +  3 \times \left(\frac{2 p + 1}{9} \right)  \log \left(\frac{2p + 1 }{9} \right) + 6 \times \frac{1 - p}{9} \log \left(\frac{1-p }{9}\right). \label{eq:ixaxb}
	\end{align} 
	\paragraph{Calculating $I(M_2^A : M_2^B)$:}
	We continue our calculation for the other two MUBs, namely, $M_2, M_3$ defined in \eqref{eq:dim3mats}, in this dimension $d = 3$ and see that they do not produce any mutual information. Let's calculate the probabilities of measurement outcomes for $M_2$ as described in the inequality \eqref{eq:dim3mats}. Here the basis vectors are: 
	\begin{align*}
		\ket{g_0} \equiv \obsth(1, \omega^2, \omega^2), \ket{g_1} \equiv \obsth(1 , 1, \omega), \ket{g_2} \equiv  \obsth (1, \omega, 1),
	\end{align*}
	for $\omega = e^{\frac{2 \pi i}{3}}$. Now, let us consider the probability that both $A$ and $B$ has the measurement outcome $\ket{g_0}$, denoted by $p_{g_0g_0}$. 
	\begin{align}
		p_{g_0g_0} &= p \bra{g_0g_0} \kb{\Psi^+}{\Psi^+}_{AB} \ket{g_0g_0} + (1 - p) \bra{g_0g_0} \frac{\mathbb{I}_{AB}}{9} \ket{g_0g_0} \nonumber\\		
		&=\frac{p}{3} \bra{g_0g_0} (\kb{00}{00} + \kb{00}{11} + \kb{00}{22} + \kb{11}{00} + \kb{11}{11} + \kb{11}{22} + ... \nonumber\\
		&\kb{22}{00} + \kb{22}{11} + \kb{22}{22})\ket{g_0g_0} + (1 - p) \bra{g_0g_0} \frac{\mathbb{I}_{AB}}{9} \ket{g_0g_0} \label{eqn:pg00}
	\end{align}		
	Focusing on one particular term $\bra{g_0g_0}\kb{00}{00}\ket{g_0g_0}$, we see that:
	\begin{align*}
		&\bra{g_0g_0}\kb{00}{00}\ket{g_0g_0} = \left(\obsth(1, \omega^2, \omega^2)\times (1, 0, 0)^T\right)^4 = \frac{1}{9}, 
	\end{align*}
	where `T' denotes transpose of that vector. In the case of $\bra{g_0g_0}\kb{11}{11}\ket{g_0g_0}$:
	\begin{align*}
		&\bra{g_0g_0}\kb{11}{11}\ket{g_0g_0} = \left|\obsth(1, \omega^2, \omega^2)\times (0, 1, 0)^T\right|^4 = \left|\frac{\omega^2}{\sqrt{3}}\right|^4  = \frac{1}{9}.
	\end{align*}
	Similarly we can see that, $\bra{g_0g_0}\kb{22}{22}\ket{g_0g_0} =\frac{1}{9}$. These are the cases when the inner products are of the form $\bra{g_0g_0}\kb{ii}{ii}\ket{g_0g_0}$, for $i \in \{0, 1, 2\}$. In the cases when $i \ne j$, in the inequality \eqref{eqn:pg00}, we have both $\bra{g_0g_0}\kb{ii}{jj}\ket{g_0g_0}$ and $\bra{g_0g_0}\kb{jj}{ii}\ket{g_0g_0}$ for $i, j \in \{0, 1, 2\}$. That is, we have both the inner product and its conjugate in the sum. By elementary properties of complex inner products, from these pairs of terms, the complex parts will vanish and we will be left with two equal real parts. The full calculation of $p_{g_0g_0}$ then goes as the following:
	\begin{align*}
		p_{g_0g_0} = & \frac{p}{3}  \sum_{i, j} \bra{g_0g_0} \kb{ii}{jj} \ket{g_0g_0} + (1 - p) \bra{g_0g_0} \frac{\mathbb{I}_{AB}}{9} \ket{g_0g_0} \\
		&=  \frac{p}{3} \left( \sum_{i} \bra{g_0g_0} \kb{ii}{ii} \ket{g_0g_0} + \sum_{i \ne j} \bra{g_0g_0} \kb{ii}{jj} \ket{g_0g_0}\right) +  \frac{(1 - p)}{9}  \times 1\\	
		&= \frac{p}{3} ( \frac{3}{9} + \bk{g_0}{0}^2\bk{1}{g_0}^2 + \bk{g_0}{0}^2\bk{2}{g_0}^2 + \bk{g_0}{1}^2\bk{0}{g_0}^2 \\
		& \hspace{8mm} + \bk{g_0}{1}^2\bk{2}{g_0}^2 + \bk{g_0}{2}^2\bk{0}{g_0}^2 +\bk{g_0}{2}^2\bk{1}{g_0}^2 ) + \frac{(1 - p)}{9}\\	
		&= \frac{p}{3}\left( \frac{3}{9} +\frac{\bk{1}{g_0}^2}{3} + \frac{\bk{2}{g_0}^2}{3} +  \frac{\bk{g_0}{1}^2}{3} + \bk{g_0}{1}^2 \bk{2}{g_0}^2 +\frac{\bk{g_0}{2}^2}{3} + \bk{g_0}{2}^2\bk{1}{g_0}^2\right) + \frac{(1 - p)}{9}\\
		&= \frac{p}{3}\left( \frac{1}{3} +\frac{\bk{1}{g_0}^2 + \bk{g_0}{1}^2}{3} + \frac{\bk{2}{g_0}^2 + \bk{g_0}{2}^2}{3} +  \bk{g_0}{1}^2 \bk{2}{g_0}^2 + \bk{g_0}{2}^2\bk{1}{g_0}^2 \right) + \frac{(1 - p)}{9}\\	
		%		&=\frac{p}{3}\left(\frac{1}{3} +\frac{\omega^4 + \overline{\omega}^4}{9} + \frac{\omega^4 + \overline{\omega}^4}{9} + \frac{\overline{\omega}^4 \times \omega^4}{9} + \frac{ \overline{\omega}^4 \times \omega^4}{9} \right) + \frac{(1 - p)}{9}\\
		&=\frac{p}{3}\left(\frac{1}{3} +\frac{\omega + \omega^2}{9} + \frac{\omega + \omega^2}{9} + \frac{\om^3}{9} + \frac{\om^3}{9} \right) + \frac{(1 - p)}{9}\\		
		&=\frac{p}{3}\left(\frac{1}{3} -\frac{1}{9}-\frac{1}{9} + \frac{1}{9} + \frac{1}{9} \right) + \frac{(1 - p)}{9}\\				
		&=\frac{p}{3} \times \frac{1}{3} + \frac{(1 - p)}{9} \\
		&= \frac{1}{9},
	\end{align*}
	remembering that $1 + \om + \om^2 = 0$ and $\om^3 = 1$. 
	A similar calculation for all $p_{g_ig_j}$, for $i, j \in \{0, 1, 2\}$ reveals that in each case, $p_{g_ig_j} = \frac{1}{9}$. So for measurement base $M_2$, the mutual information 
	\begin{align*}
		I(M^A_2 : M^B_2) = H(M_2^A) + H(M_2^B) - \sum_{i = 1}^{9} \frac{1}{9} \log \frac{1}{9} = 2 \log(3) - 2 \log(3) = 0, 
	\end{align*}
	as $H(M_2^A) = H(M_2^B) = \log(3)$. Performing this same calculation for measurement base $M_3$ also reveals that $I(M^A_3 : M^B_3) = 0$. Remembering inequality \eqref{eq:genmutinf}, considering $d = 3$, we have that the mutual information  $I(A : B)$ in this dimension is
	\begin{align*}
		I(A : B)&=2  \log(3) + \frac{p(3^2 - 1) + 1}{3^2} \log \left( \frac{p(3^2 - 1) + 1}{3^2} \right) + \sum_{i = 1}^{3^2 - 1} \frac{1-p}{3^2} \log \left( \frac{1-p}{3^2} \right) \nonumber \\
		&= 2  \log(3) + \frac{8p + 1}{9} \log \left( \frac{8p + 1}{9} \right) + \sum_{i = 1}^{8} \frac{1-p}{9} \log \left(\frac{1-p}{9}\right)
	\end{align*}
	Now we sum the resulting mutual information after $A$ and $B$ performs all $4$ MUB measurements listed in \eqref{eq:dim3mats}, compare that with $I(A : B)$ and get: 
	\begin{align}
		&I(A : B) \ge I(Z^A : Z^B) + I(X^A : X^B) + I(M^A_2 : M^B_2) + I(M^A_3 : M^B_3)\nonumber\\
		\implies &2  \log(3) + \frac{8p + 1}{9} \log \left( \frac{8p + 1}{9} \right) + \sum_{i = 1}^{8} \frac{1-p}{9} \log \left( \frac{1-p}{9} \right) \nonumber\\
		&\ge 2 \times \left(2 \log(3) +  3 \times \left(\frac{2 p + 1}{9} \right)  \log \left(\frac{2p + 1 }{9} \right) + 6 \times \frac{1 - p}{9} \log \left(\frac{1-p }{9}\right) \right) + 0 + 0. \label{eq:ecqcdim3}
	\end{align}
	We can easily see that this inequality holds for any $p \in [-\frac{1}{d^2 - 1}, 1]$. Hence we can say that, ECQC holds for isotropic states with $3$-dimensional subsystems. In figure \eqref{fig:figure3} we show the graph that depicts this inequality \eqref{eq:ecqcdim3}. %	 The following graph shows the simulation for all $p$ values for inequality \eqref{eq:ecqcdim3}. \td

	\subsubsection{Mutual information in isotropic states for arbitrary prime dimension }
	Generalizing the calculations for isotropic states with $3$-dimensional subsystems to isotropic states with prime dimensional subsystems, we can arrive at the following proposition, which we state without a proof.
	\begin{proposition} \label{prop:measmutinf}
		For isotropic states with subsystems of prime dimension $d$, except for two mutually unbiased basis measurements $\Z$ and $\X$, any other measurement basis $M_i$ will produce zero mutual information, that is, $I(M_i^A : M_i^B) = 0$, and for $\Z$ and $\X$:
		\begin{align*}
			I(Z^A : Z^B) = I(X^A : X^B) = 2 \log(d) + d\left(\frac{p}{d} + \frac{1 - p}{d^2}\right)\log\left(\frac{p}{d} + \frac{1 - p}{d^2}\right) + d^2 \times \frac{1 - p}{d^2} \log\left(\frac{1 - p}{d^2}\right). 
		\end{align*}		
	\end{proposition}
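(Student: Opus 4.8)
The plan is to compute, one basis at a time, the joint distribution $\{p_{kl}\}$ of the two outcomes obtained when both parties measure $\rab$ of \eqref{eq:isotropic} in a fixed MUB $M_i=\{\ket{e_k}\}_{k=0}^{d-1}$, and then read off $I(M_i^A:M_i^B)=2\log d-H(M_i^A M_i^B)$, using the fact already noted above that every reduced measured state is $\mbi/d$, so both marginals are uniform and $H(M_i^A)=H(M_i^B)=\log d$. The one identity I would record first is that, writing $\overline{\ket v}$ for the entrywise complex conjugate of $\ket v$ in the $\Z$ basis, $\bk{\Psi^+}{u\ot v}=\frac{1}{\sqrt d}\bk{u}{\overline v}$, hence $\left|\bk{\Psi^+}{u\ot v}\right|^2=\frac1d\left|\bk{u}{\overline v}\right|^2$. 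Substituting into $p_{kl}=\Tr\!\left(\rab\,\kb{e_k e_l}{e_k e_l}\right)$ then gives
\begin{align*}
p_{kl}=\frac{1-p}{d^2}+\frac{p}{d}\left|\bk{e_k}{\overline{e_l}}\right|^2 ,
\end{align*}
so the whole problem reduces to understanding the overlaps between $M_i$ and its complex conjugate $\overline{M_i}$.

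Next I would fix the standard Weyl-Heisenberg family of $d+1$ MUBs for prime $d$ \cite{durt2010mutually} (of which the matrices in \eqref{eq:dim3mats} are the $d=3$ instance): the eigenbasis of the clock operator $Z$, which is $\Z$, together with the eigenbases of $XZ^t$ for $t=0,1,\dots,d-1$, where $X$ is the cyclic shift; one checks that the $t=0$ member is exactly the Fourier basis $\X$. Complex conjugation is antiunitary, it fixes $X$ (a real permutation matrix) and sends $Z\mapsto Z^{-1}$, so it fixes $\Z$ and $\X$ setwise while carrying the eigenbasis of $XZ^t$ to that of $XZ^{-t}$, i.e.\ $\overline{M_t}=M_{d-t}$ for $1\le t\le d-1$. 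Because $d$ is an odd prime, the map $t\mapsto d-t$ has no fixed point on $\{1,\dots,d-1\}$, so each of these $d-1$ ``middle'' bases is carried to a \emph{different} member of the same MUB family.

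The case split is then immediate from the displayed formula for $p_{kl}$. For a middle basis $M_t$ with $1\le t\le d-1$, $\overline{M_t}=M_{d-t}$ is a distinct MUB, so $\left|\bk{e_k}{\overline{e_l}}\right|^2=1/d$ for all $k,l$ and $p_{kl}=\frac{1-p}{d^2}+\frac{p}{d^2}=\frac1{d^2}$; the joint distribution is uniform, $H(M_t^A M_t^B)=2\log d$, and $I(M_t^A:M_t^B)=0$. For $\Z$ one has $\overline{\ket k}=\ket k$, so $\left|\bk{e_k}{\overline{e_l}}\right|^2=\delta_{kl}$ and $p_{kl}=\frac{1-p}{d^2}+\frac{p}{d}\delta_{kl}$: there are $d$ probabilities equal to $\frac{p}{d}+\frac{1-p}{d^2}$ and $d^2-d$ equal to $\frac{1-p}{d^2}$, which inserted into $I(Z^A:Z^B)=2\log d-H(Z^A Z^B)$ reproduce the formula in the statement (the off-diagonal term carrying multiplicity $d^2-d$, as in \eqref{eq:izazb} for $d=3$). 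For $\X$, $\overline{\mathcal{F}\ket l}=\mathcal{F}\ket{-l}$, so $\{\overline{\ket{e_l}}\}_l=\{\ket{e_l}\}_l$ as a set; here $\left|\bk{e_k}{\overline{e_l}}\right|^2=1$ exactly when $k+l\equiv0\pmod d$ and $0$ otherwise, so again $d$ of the $p_{kl}$ equal $\frac{p}{d}+\frac{1-p}{d^2}$ and the remaining $d^2-d$ equal $\frac{1-p}{d^2}$, giving $I(X^A:X^B)=I(Z^A:Z^B)$.

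The step I expect to be the main obstacle is the second one: the phrase ``any other measurement basis $M_i$'' only has content once a concrete family of $d+1$ MUBs is pinned down, since such complete families need not be unique, and one has to verify that the conjugation symmetry really holds for the chosen family, with precisely $\Z$ and $\X$ conjugation-invariant and the remaining $d-1$ bases paired off by conjugation. For a different admissible family the bookkeeping could change, although the conclusion $I(M_i^A:M_i^B)=0$ off the $\Z,\X$ pair should persist whenever each such basis is mutually unbiased to its own complex conjugate. I would also flag that $d=2$ is genuinely excluded: there $\{1,\dots,d-1\}=\{1\}$ is a fixed point of $t\mapsto d-t$, so the third MUB is conjugation-invariant and need not give zero mutual information --- consistent with the CQC/ECQC picture degenerating at $d=2$.
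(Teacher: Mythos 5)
Your proof is correct, and it takes a genuinely different --- and substantially more complete --- route than the paper. The paper's own proof of Proposition \ref{prop:measmutinf} is only a pointer: it asserts that the explicit $d=3$ computations (summing roots of unity basis by basis for the four matrices in \eqref{eq:dim3mats}) carry over to arbitrary prime $d$, without exhibiting the general mechanism. You instead reduce everything to the single formula $p_{kl}=\frac{1-p}{d^2}+\frac{p}{d}\left|\bk{e_k}{\overline{e_l}}\right|^2$, coming from the maximally entangled component of \eqref{eq:isotropic}, and then to a conjugation symmetry of the standard complete MUB family for odd prime $d$: complex conjugation fixes $\Z$ and $\X$ setwise and pairs each of the remaining $d-1$ bases with a \emph{distinct} member of the same family, so each such basis is unbiased to its own conjugate, the joint distribution is uniform, and the mutual information vanishes, while $\Z$ and $\X$ give $d$ matched outcomes of probability $\frac{p}{d}+\frac{1-p}{d^2}$ and $d^2-d$ outcomes of probability $\frac{1-p}{d^2}$. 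This buys a uniform structural explanation of why exactly two bases are exceptional, where the paper's brute-force route only verifies it instance by instance; your caveat that the claim only has content once a concrete complete family is fixed is also apposite, and your Weyl--Heisenberg family agrees, up to phases and column ordering, with the families the paper actually uses in \eqref{eq:dim3mats} and in its $d=5$ simulations. Two small points. First, your count gives off-diagonal multiplicity $d^2-d$, which is the correct one (it matches \eqref{eq:izazb} at $d=3$ and is forced by normalization of the $p_{kl}$), whereas the proposition as printed carries the factor $d^2$; you have silently corrected what is evidently a typo in the statement, and this should be flagged explicitly rather than described as reproducing the printed formula. Second, the unsquared identity should read $\bk{\Psi^+}{u\otimes v}=\frac{1}{\sqrt d}\,\overline{\bk{u}{\overline{v}}}$; the stray complex conjugation disappears once you take the modulus squared, which is all your argument uses, so nothing downstream is affected.
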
	
	\begin{proof}
		We can follow a similar direct calculation strategy that we have followed for dimension $3$. It is easy to see that the arguments for this particular dimension $d = 3$ is still valid for arbitrary prime $d$ and the proposition \eqref{prop:measmutinf} follows. 
	\end{proof}
	
	\noindent We have already mentioned the mutual information for prime dimensional subsystems in inequality \eqref{eq:genmutinf}. Together with proposition \eqref{prop:measmutinf}, we can write the following inequality whose validity will imply the validity of ECQC in isotropic states with arbitrary prime dimensional subsystems:
	\begin{align}
		& I(A : B) \ge \sum_{i = 1}^{d + 1} I(M^A_i : M^B_i) \nonumber\\
		&\implies 2  \log(d) + \frac{p(d^2 - 1) + 1}{d^2} \log \left( \frac{p(d^2 - 1) + 1}{d^2} \right) + \sum_{i = 1}^{d^2 - 1} \frac{1-p}{d^2} \log \frac{1-p}{d^2} \nonumber\\
		&\ge 2 \times \left( 2 \log(d) + d\left(\frac{p}{d} + \frac{1 - p}{d^2}\right)\log\left(\frac{p}{d} + \frac{1 - p}{d^2}\right) + d^2 \times \frac{1 - p}{d^2} \log\left(\frac{1 - p}{d^2}\right) \right) \nonumber\\
		&\ge \underset{\mathcal{S} \in [d + 1]^d}{\min} \sum_{i \in \mathcal{S}} I(M^A_i : M^B_i) \label{eq:extra}. 
	\end{align}	
	In the following graph, we simulate the difference in the left- and right-hand side of the inequality \eqref{eq:extra} for different prime dimensions $d$. Note that, we do not need to exclude the largest classical mutual information in this particular case, as for isotropic states, $I(A:B)$ is actually greater than all $d + 1$ classical mutual information generated by the MUBs in that prime dimension.
	\begin{figure}[H]
		\centering
		\includegraphics[width=.7\linewidth]{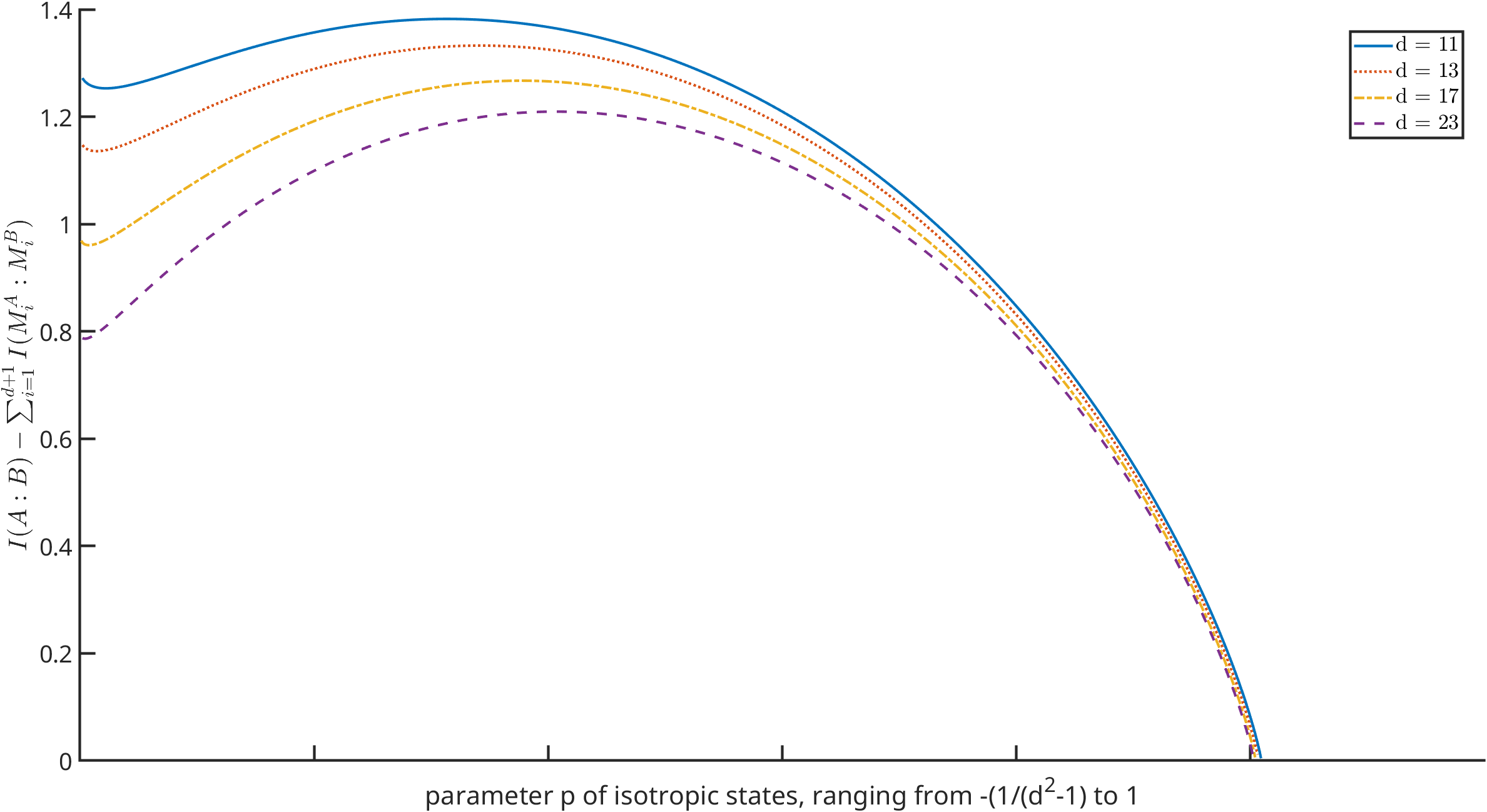}
		\caption{Graph showing the behavior of various prime dimensions ($d = 11, 13, 17, 23$) and the ECQC conjecture applied to the isotropic states. }
		\label{fig:simuldimdisotropic}
	\end{figure}
	\subsection{Simulations of ECQC in Random states}
	\subsubsection{Dimension 3}
	Using the measurement bases mentioned in inequality \eqref{eq:dim3mats}, we perform simulations of our ECQC conjecture on pure bipartite, random bipartite, and isotropic states, where all these states have three dimensional subsystems. We show that taking any three of these measurement bases, thus choosing one less than the maximum number of MUBs in this dimension, does not violate the ECQC conjecture for these states. We present the results of these simulations in figure \eqref{fig:figure3}. In this figure, in (a), where we simulate the ECQC conjecture for 100000 random bipartite pure states, we see that the difference between the left and right hands of the conjecture is never negative. Similarly, in (b), we perform the same simulation for 100000 random bipartite states, observing no negative difference between the left and right hand sides of the ECQC conjecture. In (c), we consider isotropic states with a three dimensional subsystem for the range of parameters $-\frac{1}{d^2 - 1} \le p \le 1$. We notice that, in this dimension $d = 3$, the isotropic states follow a similar trend that we have seen in figure \eqref{fig:simuldimdisotropic} and do not find any contradiction to the ECQC conjecture.  
	\begin{figure}[H]		
		\centering
		\begin{subfigure}[b]{.45\textwidth}
			\includegraphics[width=\textwidth]{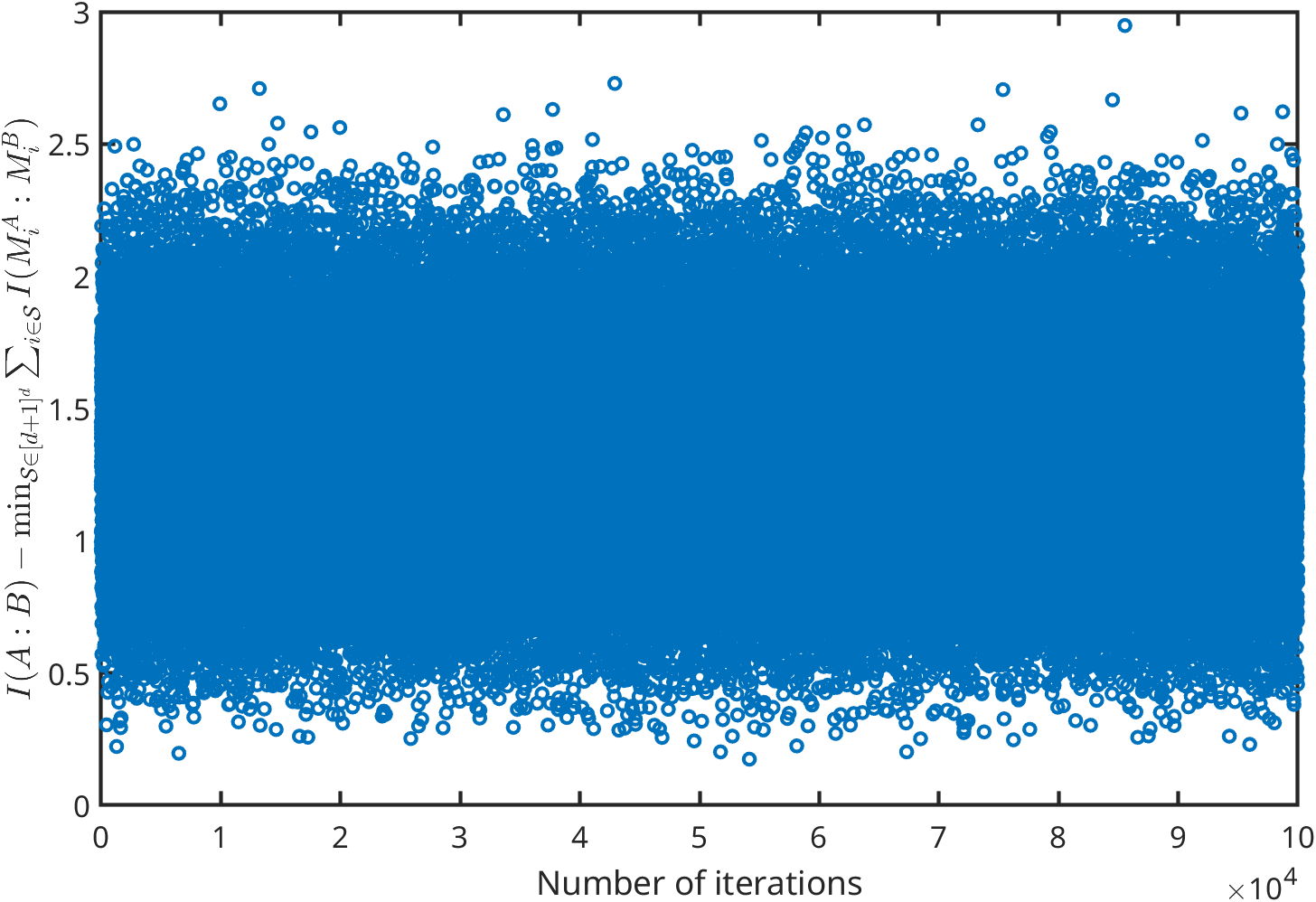}
			\caption{Pure states}
		\end{subfigure}
		\begin{subfigure}[b]{.45\textwidth}
			\includegraphics[width=\textwidth]{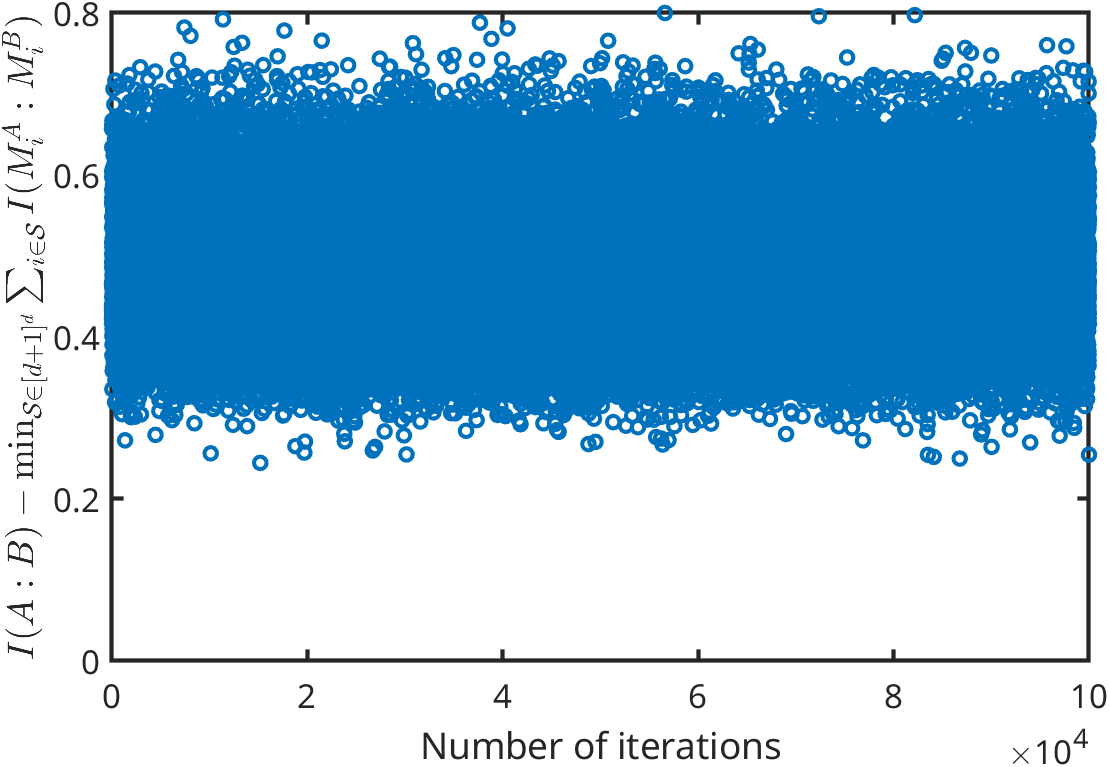}
			\caption{Random states}
		\end{subfigure}
		\begin{subfigure}[b]{.5\textwidth}
			\includegraphics[width=\textwidth]{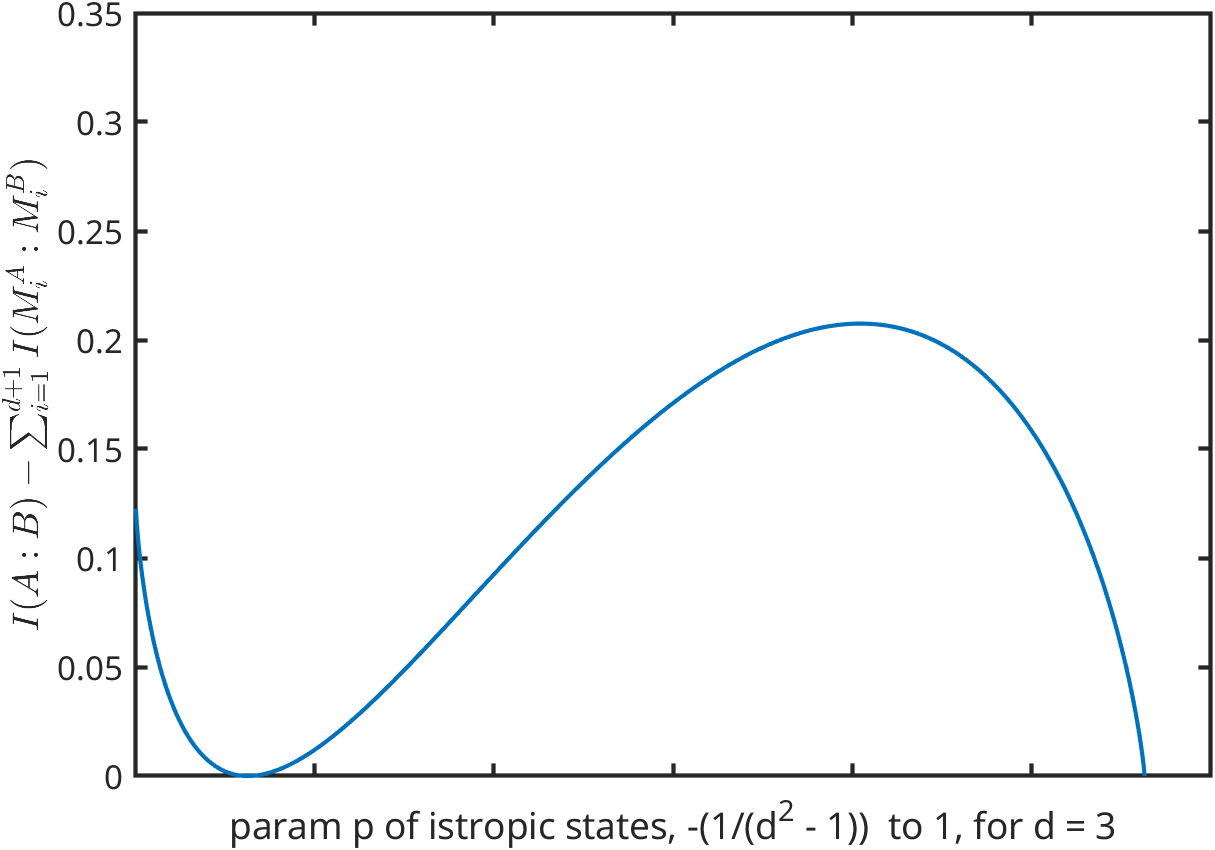}
			\caption{Isotropic states}
		\end{subfigure}		
		\caption{Simulation of ECQC on different types of states with three dimensional subsystems. }
		\label{fig:figure3}		
	\end{figure}
	
	%	sometimes for pure states, the sum of 4 mubs go beyond the entanglement entropy, which is equal to discord for pure states. that is, if we take the sum of all available mubs in dimension 3, the sum of the classical correlations could be higher than the correlation left after subtracting entanglement 3entropy from mutual information. we want to investiage this further. 
	%	\td How about a different set of MUBs in this dimension? Do they produce the same graph? Does the same hold in the original conjecture? 
	
	\subsubsection{Dimension 5}
	In this section, we simulate the ECQC for dimension $5$ for pure, random and isotropic bipartite states.	We consider the following set of MUBs in this dimension, whose formulation is found in \cite{brierley2009mutually}. These matrices are generated by the following special matrices, where note that, $M_0$ itself is the quantum Fourier matrix of order 5, and is one of the MUB bases in this dimension:
	
	\begin{align*}
		D = 
		\begin{bmatrix}
			1 & 0 & 0 & 0 & 0 \\
			0 & \om & 0 & 0 & 0\\
			0 & 0 & \om^4 & 0 & 0 \\
			0 & 0 & 0 & \om^4 & 0 \\
			0 & 0 & 0 & 0 & \om \\
		\end{bmatrix}, 
		M_0 = \frac{1}{\sqrt{5}}
		\begin{bmatrix}
			1 & 1 & 1 & 1 & 1 \\
			1 & \om & \om^2 & \om^3 & \om^4\\
			1 & \om^2 & \om^4 & \om  & \om^3 \\
			1 & \om^3 & \om & \om^4 & \om^2 \\
			1 & \om^4 & \om^3 & \om^2 & \om \\
		\end{bmatrix}.	
	\end{align*}
	Based on these two matrices $D$ and $M_0$ and remembering that identity matrix, which is denoted by $M_1$ in the following, is also part of the six MUBs found in this dimension, we generate the four remaining MUB bases:
	\begin{align*}
		M_1 = 
		\begin{bmatrix}
			1 & 0 & 0 & 0 & 0 \\
			0 & 1 & 0 & 0 & 0 \\			
			0 & 0 & 1 & 0 & 0 \\
			0 & 0 & 0 & 1 & 0 \\
			0 & 0 & 0 & 0 & 1		    		    		    
		\end{bmatrix},
		M_2 = DM_0 = \frac{1}{\sqrt{5}}
		\begin{bmatrix}
			1 & 1 & 1 & 1 & 1 \\
			\om & \om^2 & \om^3 & \om^4 & 1\\
			\om^4 & \om & \om^3 & 1  & \om^2 \\
			\om^4 & \om^2 & 1 & \om^3 & \om \\
			\om & 1 & \om^4 & \om^3 & \om^2 \\
		\end{bmatrix}, \\
		M_3 = D^2M_0 = \frac{1}{\sqrt{5}}
		\begin{bmatrix}
			1 & 1 & 1 & 1 & 1 \\
			\om^2 & \om^3 & \om^4 & 1 & \om\\
			\om^3 & 1 & \om^2 & \om^4  & \om \\
			\om^3 & \om & \om^4 & \om^2 & 1\\
			\om^2 & \om & 1 & \om^4 & \om^3 \\
		\end{bmatrix}, 
		M_4 = D^3M_0 = \frac{1}{\sqrt{5}}
		\begin{bmatrix}
			1 & 1 & 1 & 1 & 1 \\
			\om^3 & \om^4 & 1 & \om & \om^2\\
			\om^2 & \om^4 & \om & \om^3  & 1 \\
			\om^2 & 1 & \om^3 & \om & \om^4\\
			\om^3 & \om^2 & \om & 1 & \om^4 \\
		\end{bmatrix}, 
		\\
		M_5 = D^4M_0 = \frac{1}{\sqrt{5}}
		\begin{bmatrix}
			1 & 1 & 1 & 1 & 1 \\
			\om^4 & 1 & \om & \om^2 & \om^3\\
			\om & \om^3 & 1 & \om^2  & \om^4 \\
			\om & \om^4 & \om^2 & 1 & \om^3\\
			\om^4 & \om^3 & \om^2 & \om & 1 \\
		\end{bmatrix}, 
	\end{align*}
	where $\om = e^{\frac{2 \pi i}{5}}$ is the fifth root of unity. Experimenting with these 6 MUB bases, we observe similar trends for the ECQC conjecture, which we present in figure \eqref{fig:figure5}. In this figure, in (a) we take 10000 random pure states and measure them in these $6$ MUB bases. We are considering less number of states compared to dimension 3 because of the computational complexity for high-dimensional states. Then we take out the largest mutual information produced by any of these bases and sum up the rest of the $5$ resulting mutual information. Then we compare this sum with the original mutual information that was present in those states before the measurement and observe that the original mutual information is always at least large as this particular sum, which signifies the validity of the ECQC conjecture in this dimension. Similarly, in (b) we consider $10000$ random bipartite states and perform this simulation, observing no contradiction with the ECQC conjecture. In (c), we measure isotropic states with five dimensional subsystems to see the validity of this conjecture. It turns out in this prime dimension $d = 5$, only bases $M_0$ and $M_1$ produce non-zero mutual information after the bipartite measurement in these bases. The post-measurement state after the measurement by $M_2, M_3, M_4, M_5$ does not have any mutual information, which actually motivated our proposition \eqref{prop:measmutinf} for arbitrary prime dimensions for this class of states. So in (c), it is noticeable that on the Y-axis, we are not considering the sum of mutual information of the post-measurement states excluding the largest one. We are actually considering the sum of all six mutual information quantities obtainable from the post-measurement states from these six MUB bases applied on isotropic states. So, this simulation in (c) actually shows a stronger quantity than the ECQC conjecture, further providing credence to it.  
	\begin{figure}[H]		
		\centering
		\begin{subfigure}[b]{.45\textwidth}
			\includegraphics[width=\textwidth]{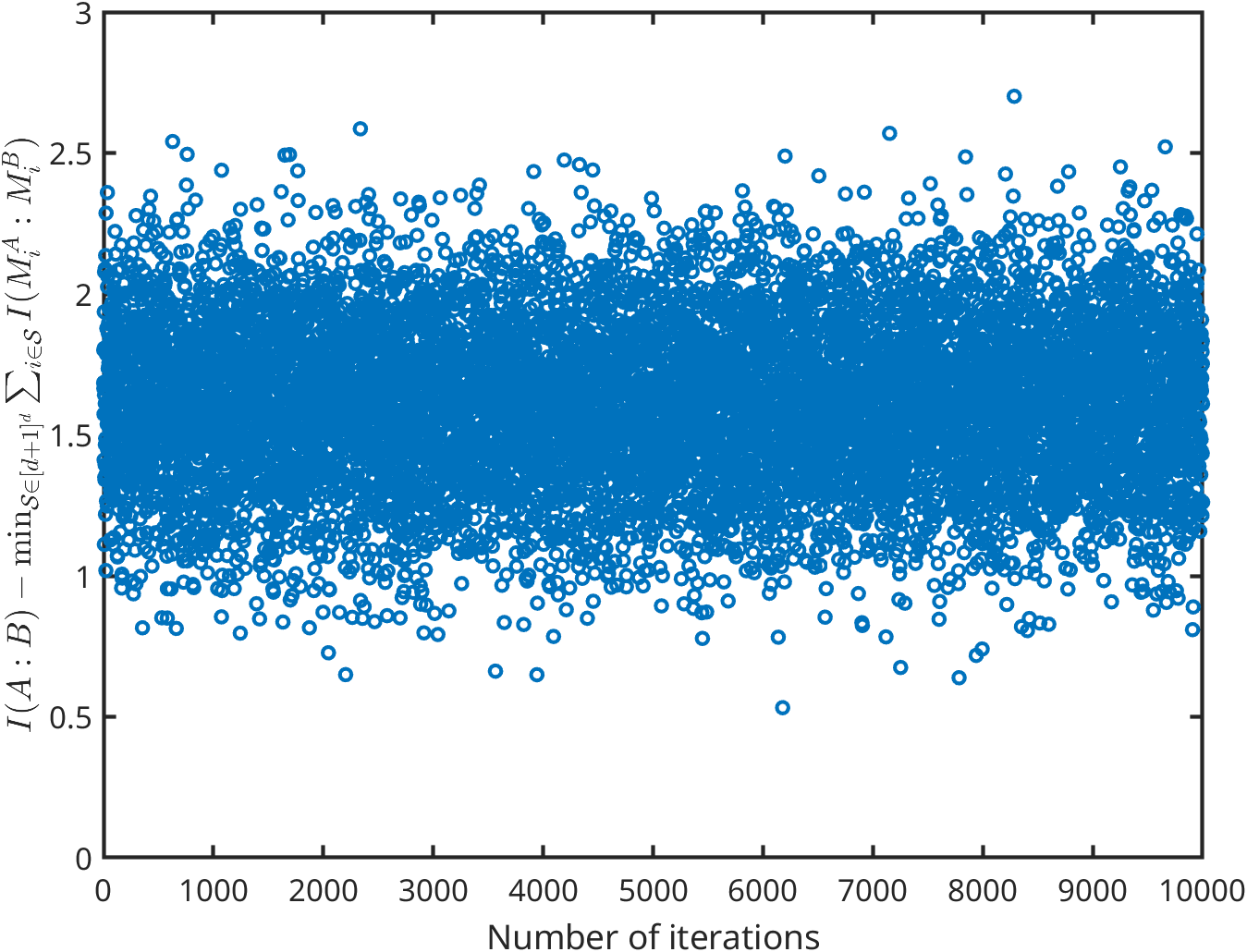}
			\caption{Pure states}
		\end{subfigure}
		\begin{subfigure}[b]{.45\textwidth}
			\includegraphics[width=\textwidth]{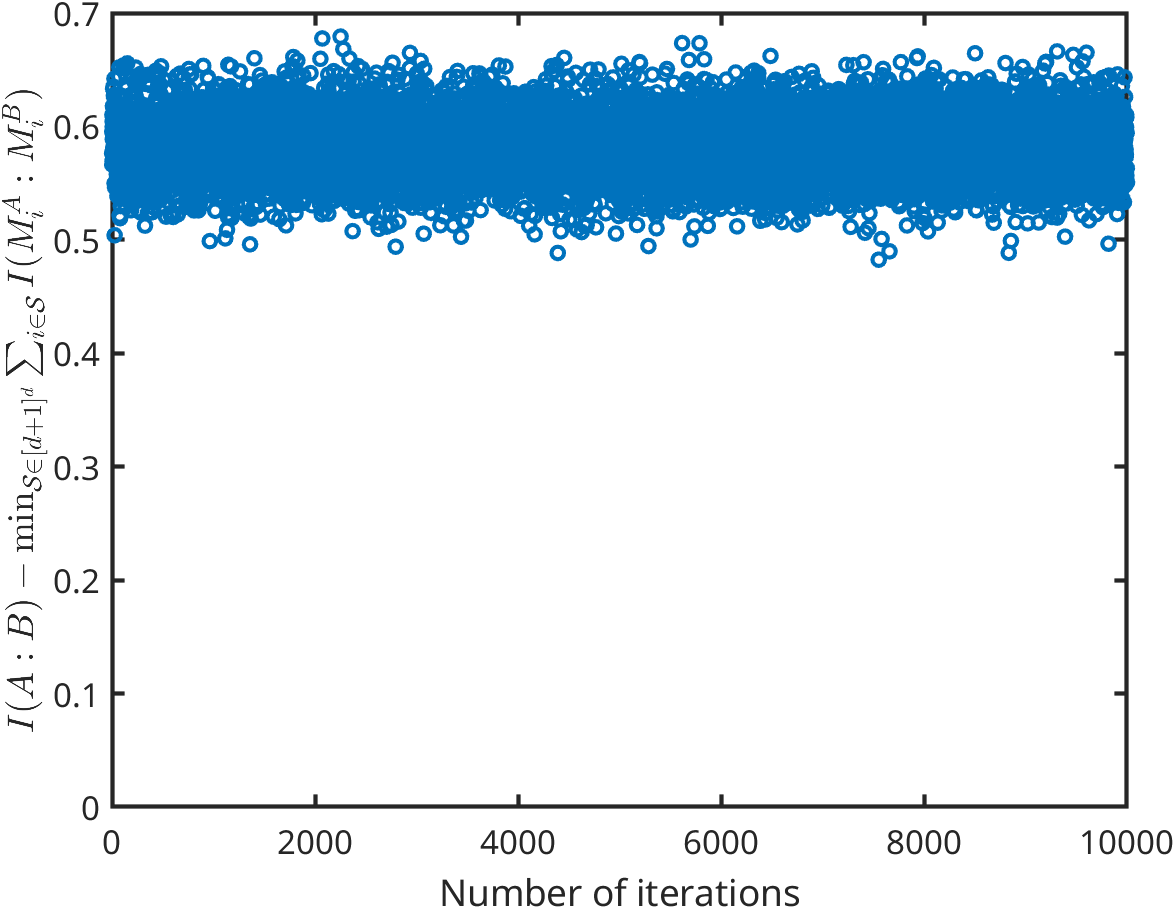}
			\caption{Random states}
		\end{subfigure}
		\begin{subfigure}[b]{.45\textwidth}
			\includegraphics[width=\textwidth]{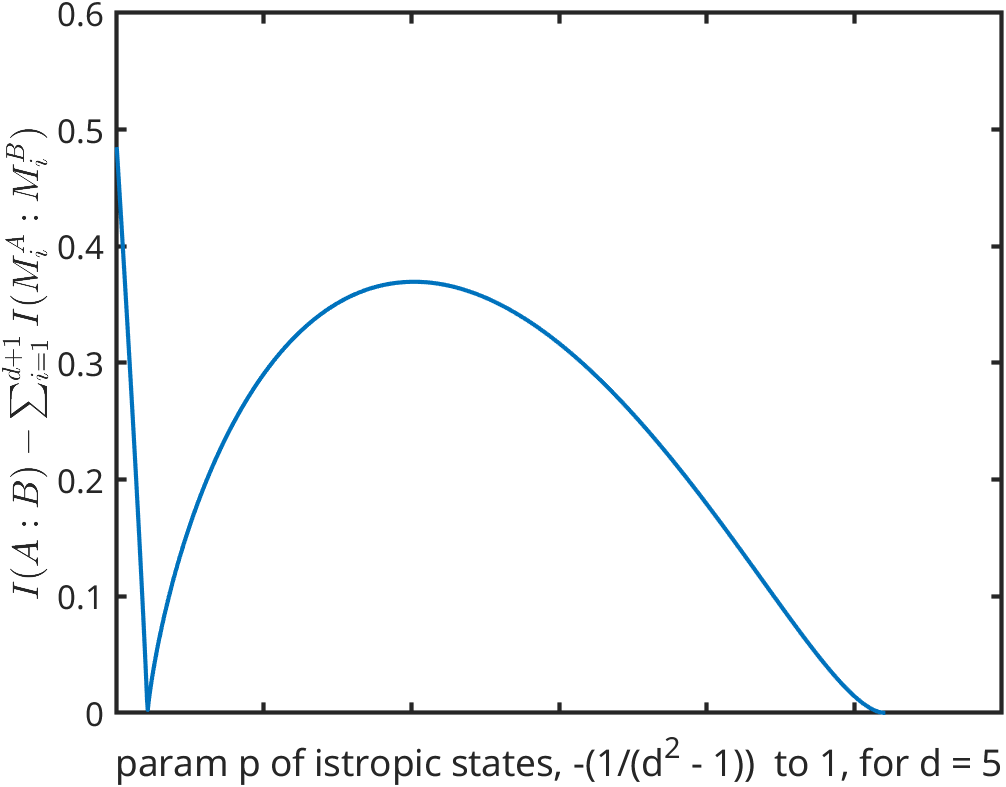}
			\caption{Isotropic states}
		\end{subfigure}		
		\caption{Simulation of ECQC on different types of states with five dimensional subsystems. }
		\label{fig:figure5}		
	\end{figure}
	
	\section{Closing Remarks and Future Works}
	In this work, we have taken a deeper look at the CQC conjecture and identified a potentially useful sufficient condition. We have also proposed an extended version of this conjecture that takes into account high-dimensional states and more measurement bases. There are a few interesting works that could be done in the future. For example, the original CQC conjecture holds for all pure states, but the technique that was used to prove this turned out to be inapplicable for the ECQC conjecture. However, as our simulations for dimensions $3$ and $5$ show, there are no observed contradictions of the ECQC for pure states. So it would be interesting to develop a method to prove ECQC for pure states analytically, similar to the case of the original CQC conjecture. It would also be interesting to formulate a conjecture that holds for both prime and composite dimensions. The difficulty in such a formulation is that the number of MUBs in a composite dimension is not straightforward to calculate, and using available MUBs sometimes leads to counterexamples of ECQC. So, for composite dimensions, we may need an entirely new formulation of the CQC conjecture, that holds for higher dimensions and more than two measurement bases. 
	
	\section*{Acknowledgment}
	Part of this work was performed when the author was at the University of Connecticut, under the supervision of Dr. Walter O. Krawec.
	\bibliography{references}
	\bibliographystyle{unsrt}
	
\end{document}